\newcommand{\rmv}[1]{}
\numberwithin{equation}{section}
\newtheorem{theorem}{Theorem}[section]
\newtheorem{corollary}[theorem]{Corollary}
\newtheorem{proposition}[theorem]{Proposition}
\newtheorem{lemma}[theorem]{Lemma}
\theoremstyle{definition}
\newtheorem{definition}[theorem]{Definition}
\newtheorem{remark}[theorem]{Remark}
\newcommand{\F}{\mathbb{F}}
\newcommand{\R}{\mathbb{R}}
\newcommand{\mN}{\mathcal{N}}
\newcommand{\alp}{\mA}
\newcommand{\ad}{\bold{A}}
\newcommand{\U}{\mathcal{U}}
\newcommand{\mF}{\mathcal{F}}
\newcommand{\mV}{\mathcal{V}}
\newcommand{\mE}{\mathcal{E}}
\newcommand{\N}{\mathbb{N}}
\newcommand{\C}{\textup{C}}
\newcommand{\Fq}{\F_q}
\newcommand{\bS}{S}
\newcommand{\bT}{\bold{T}}
\newcommand{\mU}{\mathcal{U}}
\newcommand{\mA}{\mathcal{A}}
\newcommand{\mD}{\mathcal{D}}
\newcommand{\mS}{\mathcal{S}}
\newcommand{\h}{\textup{H}}
\newcommand{\concat}{\RHD} 
\newlength{\mynodespace}
\definecolor{myg}{RGB}{220,220,220}
\title{Multishot Capacity of Networks with Restricted Adversaries}
\author[1]{Giuseppe Cotardo\thanks{G. C. was partially supported by the NSF grants DMS-2037833 and DMS-2201075, and by the Commonwealth Cyber Initiative.}}\affil[1]{Virginia Tech, Blacksburg, U.S.A.}
\author[1]{Gretchen L. Matthews\thanks{G. L. M. is supported by NSF DMS-2201075 and the Commonwealth Cyber Initiative. }}
\author[2]{Alberto Ravagnani\thanks{A. R. is supported by the Dutch Research Council through grants OCENW.KLEIN.539 and VI.Vidi.203.045.}}
\affil[2]{Eindhoven University of Technology, the Netherlands}
\author[1]{Julia Shapiro\thanks{J. S. is supported by the Department of Defense Cyber Service Academy Scholarship.}}
\date{}
\begin{document}

\maketitle

\begin{abstract}
We investigate adversarial network coding and decoding, focusing on the multishot regime and when the adversary is restricted to operate on a vulnerable region of the network. Errors can occur on a proper subset of the network edges and are modeled via an adversarial channel. The paper contains both bounds and capacity-achieving schemes for the Diamond Network, the Mirrored Diamond Network, and generalizations of these networks. We also initiate the study of the capacity of 3-level networks in the multishot setting by computing the multishot capacity of the Butterfly Network, considered in [IEEE Transactions on Information Theory, vol. 69, no. 6,~2023], which is a variant of the network introduced by Ahlswede, Cai, Li and Yeung in 2000.
\end{abstract}

\medskip

\section{Introduction}
Network coding is a communication strategy where the intermediate nodes of a network are allowed to perform coding operations on the data packets they receive before sending them towards the sinks. This strategy, in contrast to simple forwarding, can improve network throughput and optimize resource utilization. Network coding was proposed in 2000 by Ahlswede, Cai, Li and Yeung in~\cite{network2000flow} and is a thriving area of research; see~\cite{yueng2006upper,random2008network,medard2003,kschischang2019multi,jafari2009multi, multi2009bound, nutman2008, kotter2008,   wang2007broadcast} among many others. Adversarial models have been investigated using network coding in many contexts, including Byzantine adversaries~\cite{byzantine2007} and adversaries able to control some of the networks vertices~\cite{wang2007broadcast}. Error correction in the context of linear network coding and bounds for network coding have been studied extensively; see for instance~\cite{yueng2006upper, yueng2006lower, linear2003network, yueng2007l, yueng2007refined}.

Following~\cite{ravagnani2018}, we consider networks affected by adversarial noise. When the adversary is not restricted, end-to-end coding combined with network coding can be used to achieve capacity. End-to-end approaches for controlling random and coherent errors were presented in~\cite{random2008network, kotter2008}, including efficient encoding and decoding schemes using rank-metric codes. In \cite{multiple,kschischang2019multi, multi2009bound,  nobrega2010multi}, the authors have studied multishot network coding in the context where end-to-end coding combined with network coding can be used to achieve capacity. 

In~\cite{beemer2023network, BEEMER202236, curious2021diamond}, it was shown that restricting the adversary to a proper subset of the network edges makes end-to-end communication strategies suboptimal. In this context, the concept of \textit{network decoding}, the use partial decoding in the intermediate nodes, was introduced as an essential strategy for achieving capacity in networks with restricted adversaries. The authors demonstrated a minimal example of a network, called the diamond network, with a restricted adversary, where both  the Singleton Cut-Set bound introduced in \cite{ravagnani2018} (the best cut-set bound for this network) cannot be achieved, and the value of the one-shot capacity cannot be achieved using linear operations at intermediate nodes to process information before forwarding. This network requires network decoding to achieve capacity. The Mirror Diamond Network was then introduced as an extension of the Diamond Network and the capacity of this network was shown to meet the Singleton Cut-Set bound, but still forced to use non-linear network codes to achieve capacity. By limiting where the adversary can attack to a proper subset of network edges, the best known cut-set bounds for networks are not always tight. This restriction regardless of the alphabet size also destroys the ability to achieve capacity with linear network codes in combination with end-to-end coding/decoding.

In this paper, we advance the study of networks with restricted adversaries and their capacity. We focus on restricted adversaries, aligning with the framework proposed in~\cite{beemer2023network}, which requires network decoding to achieve one-shot capacity. The goal of this paper is to assess the potential increase in capacity achieved by utilizing a network \textit{multiple} times, instead of just once as in~\cite{beemer2023network}. While the benefits of multiple channel uses are clear and well understood in the context of classical coding theory, for networks with restricted adversaries, these benefits have never been investigated before. The only contribution we are aware of is our preliminary work~\cite{multishotND}, which serves as the starting point for this paper.

The rest of the paper is organized as follows: Sections~\ref{prob} and ~\ref{prelim} introduce the problem set up and necessary notation. Section~\ref{multishot1} computes the capacity of the Diamond Networks and the families of networks introduced in~\cite{beemer2023network, BEEMER202236,curious2021diamond}. Section~\ref{doubleM} establishes an upper bound on the capacity of the larger network based on that of the corresponding smaller network. The bounds in Section~\ref{doubleM} are then used to compute the capacity of the butterfly network in Figure \ref{fig:butt} over multiple uses. The paper ends with a conclusion given in Section~\ref{S:conclusion} and an appendix containing some additional details.

\section{Motivating Example}\label{prob}

Our work focuses on the typical setting in network coding, as in~\cite{network2000flow,ravagnani2018,linear2003network} and many others. Our goal is to understand what the advantage is of using a network multiple times for communication in a setting where the adversary is restricted to corrupting a proper subset of network edges. This is a nontrivial question that has not been addressed and was first introduced in \cite{beemer2023network}. With some strategies, there is an immediate gain of capacity provided over the multiple transmission rounds. We start with an example from~\cite{beemer2023network}, where we illustrate the advantage of using a network multiple times by providing the capacity of the network $\mathcal{B}$ in Figure \ref{fig:butt}.

\begin{figure}[h!]
\centering
	\begin{tikzpicture}[
      mycircle/.style={
         circle,
         draw=black,
         fill=white,
         fill opacity = 0.4,
         text opacity=1,
         inner sep=0pt,
         minimum size=20pt,
         font=\small},
      ->,>=stealth,thick,
      node distance=1.2cm and 1.6cm,
      every text node part/.style={align=center}
      ]
		\node[mycircle,fill=white] (1) {$S$};
		\node[mycircle, below right =of 1,fill=Salmon!60] (2) {$V_2$};
		\node[mycircle, above right =of 1,fill=Salmon!60] (3) {$V_1$};
		\node[mycircle, below right =of 3,fill=Salmon!60] (4) {$V_3$};
		\node[mycircle, right =of 4,fill=Salmon!60] (5) {$V_4$};
		\node[mycircle, below right =of 5,fill=white] (6) {$T_2$};
		\node[mycircle, above right =of 5,fill=white] (7) {$T_1$};
		
		\path[every node/.style={font=\sffamily\small}]
    		(1) edge[bend right,dashed] node [fill=white,sloped,inner sep=2pt] {$e_4$} (2)
    		(1) edge[bend left,dashed] node [fill=white,sloped,inner sep=2pt] {$e_3$} (2)
    		(1) edge[bend right,dashed] node [fill=white,sloped,inner sep=2pt] {$e_2$} (3)
    		(1) edge[bend left,dashed] node [fill=white,sloped,inner sep=2pt] {$e_1$} (3)
    		(2) edge[dashed] node[fill=white,sloped,inner sep=2pt] {$e_7$} (4)
    		(2) edge node[fill=white,sloped,inner sep=2pt] {$e_8$} (6)
    		(3) edge[dashed] node[fill=white,sloped,inner sep=2pt] {$e_6$} (4)
    		(3) edge node[fill=white,sloped,inner sep=2pt] {$e_5$} (7)
    		(4) edge[dashed] node[fill=white,sloped,inner sep=2pt] {$e_9$} (5)
    		(5) edge node [fill=white,sloped,inner sep=2pt] {$e_{11}$} (6)
    		(5) edge node [fill=white,sloped,inner sep=2pt] {$e_{10}$} (7)
    		;
	\end{tikzpicture}
 \caption{The Butterfly Network $\mathcal{B}$.}
 \label{fig:butt}
\end{figure}
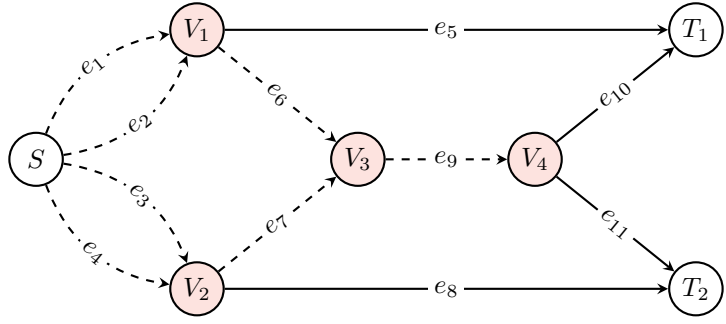

Classically, using a channel multiple times may increase the capacity; see~\cite[Example~56]{ravagnani2018}. In~\cite[Theorem~8.9]{beemer2023network}, the authors proved that the one-shot capacity of $\mathcal{B}$ is~$\log_{|\alp|}(|\alp|-1)$ with the help of the Double Cut-Set Bound in~\cite[Theorem 8.6]{beemer2023network}. We extend~\cite[Theorem 8.6]{beemer2023network} to the multishot setting in Theorem \ref{dcsb}. In Theorem~\ref{lbn} we compute the capacity of $\mathcal{B}$ over multiple times in the setting where the adversary is more restricted, meaning, it cannot change the edges attacked each transmission round. The \textit{$i$-shot capacity of $\mathcal{B}$} shown in Theorem~\ref{lbn} is \begin{equation*}
\frac{\log_{|\alp|}(|\alp|^i-1)}{i}.
\end{equation*}

In this restricted adversarial model, there is a gain in using $\mathcal{B}$ multiple times for communication. In contrast, when the adversary is free to change the edges attacked each transmission round, we show in Theorem~\ref{cbn}, that the \textit{$i$-shot capacity of $\mathcal{B}$} is
\begin{equation*}
\log_{|\alp|}(|\alp|-1),
\end{equation*} the same as the one-shot capacity. When the adversary is more free, there is no gain in using $\mathcal{B}$ multiple times for communication. This contrast and determining whether there is a gain in capacity over multiple uses of a network motivates our work.

\section{Network Decoding}\label{prelim}
Throughout the paper, we let $q$ be a prime power, $\Fq$ be the finite field with $q$ elements, and $d_{H}(x,y)$ denote the Hamming distance between $x$ and $y$, where $x$ and $y$ are codewords in a code $C$. In this section, we recall some definitions and notation from network decoding for the convenience of the reader and also point the reader to Appendix~\ref{sec:app} and additional references such as~\cite{beemer2023network,ravagnani2018}.

\begin{definition} A (single source) \textbf{network} is a 4-tuple $\mN = (\mV,\mE,S,\mathbf{T})$ where:
\begin{enumerate}
\item $(\mV,\mE)$ is a directed, acyclic and finite multigraph;\vspace{-2ex}
\item $S \in \mV$ is the  \textbf{source};\vspace{-2ex}
\item $\bold{T} \subseteq \mV$ is the set of \textbf{terminals}.
\end{enumerate}
We also assume the following:
\begin{enumerate}[resume]
\item $|\bold{T}| \geq 1$ and $S \notin \bold{T}$.\vspace{-2ex}
\item The source does not have incoming edges and the terminals do not have outgoing edges.\vspace{-2ex}
\item There exists a directed path from $S$ to any $T \in \bold{T}$.\vspace{-2ex}
\item For every $V \in \mathcal{V}\setminus (\{S\} \cup \bold{T}),$ there exists a directed path from $S$ to $V$ and from $V$ to $T$ for some $T \in \bold{T}$.
\end{enumerate}
\end{definition}

The elements $V \in \mathcal{V}\setminus (\{S\}\cup \bold{T})$ are called \textbf{intermediate nodes}. We denote the sets of incident edges as $\textup{in}(V)$ and $\textup{out}(V)$ with their cardinalities defined as the \textbf{indegree}~$\deg^{-}(V)$ and \textbf{outdegree} $\deg^{+}(V)$. Each edge of the network~$\mN$ carries at most one element from an \textbf{alphabet} $\mA$, with $|\alp| \geq 2$. The intermediate vertices $V$ in the network~$\mN$ receive symbols from $\mA$ over the incoming edges, process them according to a chosen set of functions, and then output the information over the outgoing edges. We model errors in transmission as presented by an omniscient adversary $\ad_{\mN}$ who can change the symbol sent on up to $t$ edges from a fixed subset $\mathcal{U} \subseteq \mathcal{E}$ to any other symbol of $\alp$. The pair $(\mN, \ad_{\mN})$, where $\ad_{\mN}$ is the adversary, is called an \textbf{adversarial network}.

The edges $\mE$ of a network $\mN = (\mV, \mE, S, \bold{T})$ can be partially ordered as follows. Given~$e, e' \in \mE$, we say that $e$ precedes $e'$, denoted $e \preceq e'$, if there exists a directed path in~$\mN$ from $e$ to $e'$. Notice that the partial order $\preceq$ on $\mE$ can be extended to a total order~$\leq$. The total order extension satisfies the property $e \preceq e'$ implies $e \leq e'$. We assume a fixed total order, indicated by the labeling of the edges.

\begin{definition} [{\cite[Definition 40]{ravagnani2018}}] Let $\mN=(\mV, \mE, S, \bold{T})$ be a network. A \textbf{network code} $\mathcal{F}$ for $\mN$ is a set of functions 
$\{\mathcal{F}_V\,:\, \mA^{\textup{deg}^{-}(V)} \to \mA^{\textup{deg}^{+}(V)} \mid  V \in 
  \mathcal{V} \setminus (\{S\} \cup \bold{T})\}.$ 
\end{definition} 

The functions in $\mF$ describe how $\mN$ processes the information at each intermediate node input from the incoming edges. These functions are unique due to the choice of the total order $\leq$.

\begin{definition}\label{notC}
Let $(\mN,\ad)$ be an adversarial network with $\mN=(\mV, \mE, S, \bold{T})$ and let~$\U \subseteq \mE$ be a set of edges that the adversary can corrupt. Let $\mF$ be a network code for~$\mN$ and $t \geq 0$ be an integer. The channel representing the transfer from $S$ to $T \in \bold{T}$ is
\[ \Omega[\mN, \mA, \mF, S \longrightarrow T, \mU, t] \,:\, \alp^{\textup{deg}^+(S)} \to \alp^{\textup{deg}^{-}(T)}.\]
\end{definition}

We often write $\Omega$ to represent $\Omega[\mN, \mA, \mF, S \longrightarrow T, \mU, t]$. We now recall the definition of an unambiguous code.

\begin{definition}
An \textbf{outer} \textbf{code} for a network $\mN=(\mV, \mE, S, \bold{T})$ is a subset
$C \subseteq \alp^{\textup{deg}^{+}(S)}$ with $|C| \ge 1$. An \textbf{unambiguous} code $C$ for each channel $\Omega$ is a code such that for all~$x,y \in C$ with $x \neq y$ and for all $T \in \bold{T}$ we have
\begin{equation*}
    \Omega[\mN, \alp, \mF, \textup{out}(S) \longrightarrow \textup{in}(T), \mU,t](x) \cap \Omega[\mN, \alp, \mF, S \longrightarrow T, \mU, t](y) =\emptyset.
\end{equation*}
\end{definition}

The empty intersection guarantees that every element of $C$ can be recovered by every terminal uniquely, regardless of the action the adversary takes. We note that for any unambiguous code, any subcode is also unambiguous. Next, we define the notion of one-shot capacity of an adversarial network as in~\cite[Definition~3.18]{beemer2023network}. Let $\ad_{\mN}$ be the adversary for $\mN$ capable of corrupting up to $t$ edges of a proper subset $\mU \subsetneq \mE.$ 

\begin{definition}\label{oneshot} The \textbf{one-shot} capacity of an adversarial network $(\mathcal{N},\ad_{\mathcal{N}})$ over an alphabet $\alp$, denoted $C_1(\mN, \alp, \ad_{\mN})$, is the maximum $\alpha \in \R$ such that there exists an unambiguous code $C$ for $\Omega$ and a network code $\mathcal{F}$ with $\alpha = \log_{|\mathcal{A}|}(|C|),$ meaning, \[C_1(\mN, \alp, \ad_{\mN}) = \max \left\{\log_{|\alp|}{|C|}\,:\, C \hbox{ is an unambiguous code for each } \Omega \right\}.\]

\end{definition}

An upper bound on the one-shot capacity of an adversarial network is given below. Recall that an edge cut of a graph $G$ is a set of edges $\mE' \subseteq \mE$ such that $G \setminus \mE'$ is disconnected.  

\begin{theorem}[{Singleton Cut-Set Bound~\cite[Corollary 66]{ravagnani2018}}]\label{cutset}
Let $t \geq 0$ and let $\alp$ be an alphabet. Suppose that an adversary $\ad_{\mN}$ can corrupt up to $t$ edges from a subset $\mU \subseteq \mE$. We have that
\[\displaystyle C_1(\mN, \alp, \ad_{\mN}) \leq \min_{T \in \bT} \min_{\mE'}\left(|\mE'\setminus \mU| + \max\{0,|\mE' \cap \mU|- 2t\} \right),\] where $\mE' \subseteq \mE$ ranges over all edge cuts between the source $S$
and $T$.   
\end{theorem}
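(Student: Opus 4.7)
The plan is to reduce the bound to a statement about ``cut vectors'' of codewords. Fix an unambiguous code $C$ and network code $\mF$ achieving $C_1(\mN,\alp,\ad_\mN)$, a terminal $T \in \bT$, and an edge cut $\mE'$ separating $S$ from $T$. For each $x \in C$, let $a(x) \in \alp^{|\mE'|}$ be the tuple of symbols appearing on the edges of $\mE'$ when $S$ transmits $x$ and the adversary is inactive; this is determined by $x$ and $\mF$. Because $\mE'$ separates $S$ from $T$, everything received at $T$ in the absence of further corruption is a deterministic function of $a(x)$ via $\mF$, so the whole analysis can be pushed down to the cut-image map $a \colon C \to \alp^{|\mE'|}$.

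The key step is to show that for distinct $x,y \in C$, either $a(x)$ and $a(y)$ differ on some coordinate in $\mE' \setminus \mU$, or their Hamming distance on the coordinates indexed by $\mE' \cap \mU$ is at least $2t+1$. If $a(x) = a(y)$, then $x$ and $y$ produce identical outputs at $T$ under the zero adversarial action, contradicting unambiguity. If they agree on $\mE' \setminus \mU$ and their Hamming distance on $\mE' \cap \mU$ is at most $2t$, let $D \subseteq \mE' \cap \mU$ denote the positions where they differ and split $D = D_x \sqcup D_y$ with each part of size at most $t$. When $S$ transmits $x$, the adversary corrupts the $\leq t$ edges of $D_y \subseteq \mU$ to the values prescribed by $a(y)$; symmetrically, when $S$ transmits $y$, it corrupts the edges of $D_x$ to the values of $a(x)$. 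A coordinate-wise check shows that in both cases the resulting vector on $\mE'$ equals $a(x)$ on $D_x$, $a(y)$ on $D_y$, and the common value $a(x) = a(y)$ on the remaining coordinates, so the outputs at $T$ coincide, contradicting unambiguity.

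With this minimum-distance property in hand, partition $C$ according to the value of $a(\cdot)$ on the coordinates $\mE' \setminus \mU$; there are at most $|\alp|^{|\mE' \setminus \mU|}$ such values. Inside each class, the projections to $\mE' \cap \mU$ form a code of length $|\mE' \cap \mU|$ with minimum distance at least $2t+1$, so by the classical Singleton bound its size is at most $|\alp|^{\max\{0,\,|\mE' \cap \mU| - 2t\}}$. Multiplying gives $|C| \leq |\alp|^{|\mE' \setminus \mU| + \max\{0,\,|\mE' \cap \mU| - 2t\}}$, and the claim follows by taking $\log_{|\alp|}$ and minimizing over $T \in \bT$ and over cuts $\mE'$. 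The main obstacle is the construction of the adversarial strategy in the second case: one must check that all corruptions lie in $\mU$ and that they total at most $t$ (which holds because $D_x, D_y \subseteq \mE' \cap \mU \subseteq \mU$ with $|D_x|, |D_y| \leq t$), and that the downstream processing by $\mF$ genuinely depends only on the cut vector, which is precisely the defining property of $\mE'$ being an edge cut.
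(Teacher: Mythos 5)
The paper does not actually give a proof of this theorem --- it is imported verbatim from \cite[Corollary 66]{ravagnani2018} --- so I can only evaluate your argument on its own terms. Your plan is a reasonable one, and the distance/Singleton counting at the end is correct, but there is a genuine gap in the ``key step,'' and it concerns exactly the generality the statement demands: the bound is a minimum over \emph{all} edge cuts $\mE'$, not only antichain cuts. When the cut contains two edges $e \preccurlyeq e'$ with $e \neq e'$, your claim that corrupting the edges of $D_y$ to the values prescribed by $a(y)$ produces the cut vector ``$a(x)$ on $D_x$, $a(y)$ on $D_y$, common on the rest'' is false in general: corrupting a cut edge that lies upstream of another cut edge can propagate through the network code and alter the value carried by the downstream cut edge, so the final vector on $\mE'$ is not obtained by coordinate-wise surgery on $a(x)$ and $a(y)$. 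A concrete instance: take vertices $S, V_1, V_2, T$ with edges $e_1=(S,V_1)$, $e_2=(S,V_2)$, $e_5=(V_1,V_2)$, $e_3=(V_1,T)$, $e_4=(V_2,T)$, adversary set $\mU=\{e_1,e_4\}$, $t=1$. Then $\mE'=\{e_1,e_4\}$ is an edge cut (induced by the vertex partition $\{S,V_2\}\cup\{V_1,T\}$), but $e_1 \preccurlyeq e_4$ via $e_5$, and if the network code at $V_2$ depends on its $e_5$ input, corrupting $e_1$ changes the symbol emitted on $e_4$. This non-antichain cut gives the tighter value $0 + \max\{0,2-2\} = 0$ in the bound, whereas every antichain cut in this network gives at least $1$; so restricting your argument to antichain cuts produces a strictly weaker conclusion than the theorem asserts.

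What is missing is the handling of this propagation. One cannot simply pass to a minimal/antichain subcut, since (as the example shows) the non-antichain cut may be the one that minimizes the expression. The repair requires choosing the adversary's replacement values more carefully: rather than stamping in the entries of $a(y)$, one must use the values that the corrupted edges would carry \emph{after} the upstream corruption has already propagated, so that the two hybrid executions (``$x$ sent, $D_y$ corrupted'' and ``$y$ sent, $D_x$ corrupted'') agree coordinate-by-coordinate on $\mE'$ by construction. In the example this means corrupting $e_4$ to $\mF_{V_2}(x_{e_2},\,\mF_{V_1}(y_{e_1})\vert_{e_5})$, which is neither $a(x)_{e_4}$ nor $a(y)_{e_4}$. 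Your submitted argument does not anticipate this and, as written, proves the bound only for antichain cuts. Everything else --- the injectivity observation when $a(x)=a(y)$, the partition by $\mE'\setminus\mU$-coordinates, and the Singleton-bound count giving $|\alp|^{\max\{0,|\mE'\cap\mU|-2t\}}$ per class --- is fine.
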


We now discuss the previous work on the one-shot capacity of the Diamond Networks provided in~\cite{beemer2023network,BEEMER202236,curious2021diamond}. Let $\mD$ be the network in Figure~\ref{diamond} and let $\ad_{\mD}$ be an adversary that can corrupt at most one of the dashed edges. The pair $(\mD,\ad_{\mD})$ is called the \textbf{Diamond Network}.
\begin{figure}
    \centering
    \begin{tikzpicture}
\tikzset{vertex/.style = {shape=circle,draw,inner sep=0pt,minimum size=2em}}
\tikzset{nnode/.style = {shape=circle,fill=Salmon!30,draw,inner sep=0pt,minimum
size=2em}}
\tikzset{edge/.style = {->,> = stealth}}
\tikzset{ddedge/.style = {dashed,->,> = stealth}}
\node[vertex] (S) {$S$};
\node[shape=coordinate,right=\mynodespace of S] (L) {};
\node[nnode,above=0.5\mynodespace of L] (V1) {$V_1$};
\node[nnode,below=0.5\mynodespace of L] (V2) {$V_2$};
\node[vertex,right=\mynodespace of L] (T) {$T$};
\draw[ddedge,bend left=0] (S)  to node[sloped,fill=white, inner sep=1pt]{\small $e_1$} (V1);
\draw[ddedge,bend left=16] (S) to  node[sloped,fill=white, inner sep=1pt]{\small $e_2$} (V2);
\draw[ddedge,bend right=16] (S)  to node[sloped,fill=white, inner sep=1pt]{\small $e_3$} (V2);
\draw[edge,bend left=0] (V1)  to node[sloped,fill=white, inner sep=1pt]{\small $e_4$} (T);
\draw[edge,bend left=0] (V2)  to node[sloped,fill=white, inner sep=1pt]{\small $e_{5}$} (T);
\end{tikzpicture} 
\caption{\label{diamond}{{The Diamond Network $\mD$.}}}
\end{figure}
 It was shown in~\cite[Section~III]{BEEMER202236} that this is the smallest example of a network that does not meet the Singleton Cut-Set Bound~\cite[Corollary 66]{ravagnani2018}, illustrating the importance of intermediate nodes performing \textit{partial decoding} in order to achieve capacity. The authors show in~\cite[Theorem 13]{BEEMER202236} that the one-shot capacity of $\mD$ is $$C_1(\mD, \alp, \ad_{\mD}) = \textup{log}_{|\alp|}(|\alp|-1).$$ 
The strategy provided in~\cite{BEEMER202236} demonstrates that one symbol must be reserved from $\alp$ to implement an adversary detection mechanism, resulting in a non-integer value capacity. In~\cite[Sections~3 and~4]{BEEMER202236}, it was shown that the network in Figure~\ref{mirrored}, attains the Singleton Cut-Set Bound. The pair $(\mS,\ad_{\mS})$ is called the \textbf{Mirrored Diamond Network}. 
\begin{figure}
    \centering
    \begin{tikzpicture}
\tikzset{vertex/.style = {shape=circle,draw,inner sep=0pt,minimum size=2.0em}}
\tikzset{nnode/.style = {shape=circle,fill=Salmon!30,draw,inner sep=0pt,minimum
size=2.0em}}
\tikzset{edge/.style = {->,> = stealth}}
\tikzset{ddedge/.style = {dashed,->,> = stealth}}

\node[vertex] (S) {$S$};

\node[shape=coordinate,right=\mynodespace of S] (L) {};

\node[nnode,above=0.5\mynodespace of L] (V1) {$V_1$};

\node[nnode,below=0.5\mynodespace of L] (V2) {$V_2$};

\node[vertex,right=\mynodespace of L] (T) {$T$};

\draw[ddedge,bend left=16] (S)  to node[sloped,fill=white, inner sep=1pt]{\small $e_1$} (V1);

\draw[ddedge,bend right=16] (S) to  node[sloped,fill=white, inner sep=1pt]{\small $e_2$} (V1);

\draw[ddedge,bend left=16] (S)  to node[sloped,fill=white, inner sep=1pt]{\small $e_3$} (V2);

\draw[ddedge,bend right=16] (S)  to node[sloped,fill=white, inner sep=1pt]{\small $e_4$} (V2);

\draw[edge,bend left=0] (V1)  to node[sloped,fill=white, inner sep=1pt]{\small $e_4$} (T);

\draw[edge,bend left=0] (V2)  to node[sloped,fill=white, inner sep=1pt]{\small $e_{5}$} (T);

\end{tikzpicture} 
\caption{{{\label{mirrored} The Mirrored Diamond Network $\mS$.}}}
\end{figure}
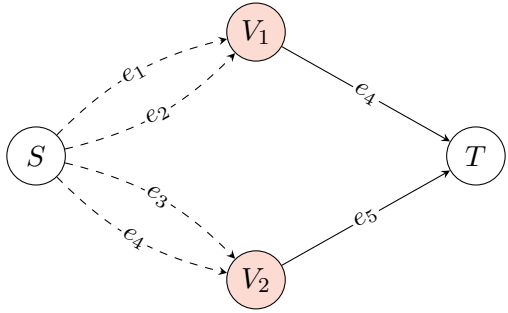
The authors show in~\cite[Proposition~14]{BEEMER202236} that the the one-shot capacity of $\mS$ is $$C_1(\mS, \alp, \ad_{\mS}) =1.$$ 

We follow the notation introduced in~\cite[Section~V]{beemer2023network} and refer the reader to that paper for details. A network $\mN = (\mV, \mE, \bS, \bT)$ is simple if $\bT = \{T\}$. Such a network $\mN$ is a simple $2$-level network if every path from $S$ to $T$ has length $2$. Let $\mN$ be a simple $2$-level network with $j \geq 1$ intermediate nodes. Let $[a_1,\dots,a_j]$ and $[b_1,\dots,b_j]^T$ be the matrix representation of the graph induced by the source, the intermediate nodes, and the terminal respectively where $a_i \in [a_1, \ldots, a_j]$ represents the number of incoming edges of intermediate node $V_i$ and $b_i \in [b_1, \ldots, b_j]$ represents the number of outgoing edges of intermediate node $V_i$. We denote $\mN$ by $([a_1,\dots,a_j],[b_1,\dots,b_j])$. In~\cite[Section~V]{beemer2023network} the authors introduced $5$ families of networks and a Singleton Cut-Set Bound~\cite[Corollary~6.2]{beemer2023network} that generalizes~\cite[Corollary 66]{ravagnani2018} and provide an upper bound on the capacity of the five families of networks introduced in~\cite[Section~V]{beemer2023network}. In Table~\ref{tab:my_label}, we summarize the upper bounds using~\cite[Corollary~6.2]{beemer2023network} and the best known upper bound on their capacities provided in~\cite[Section~VI and~VII]{beemer2023network} over large alphabet, that is, $|\alp| \gg t$.

\begin{table}[h!]
\renewcommand{\arraystretch}{1.6}
    \centering\resizebox{\columnwidth}{!}{
    \begin{tabular}{|c|c|c|c|}
    \hline 
        \hbox{Family} & \hbox{Upper bound} & \hbox{One-shot Capacity} & \hbox{Reference}\\
        \hline 
        $\mathfrak{A}_t = ([t,2t],[t,t]), t\geq 1$ & $C_1(\mathfrak{A}_t, \ad_{\mathfrak{A}_t}) \leq t$ & $C_1(\mathfrak{A}_t, \ad_{\mathfrak{A}_t}) < t$ &~\cite[Theorem 6.16]{beemer2023network}\\
        \hline 
        $\mathfrak{B}_s = ([1,s+1],[1,s]), s \geq 1$ & $C_1(\mathfrak{B}_s, \ad_{\mathfrak{B}_s}) \leq s$ & $C_1(\mathfrak{B}_s, \ad_{\mathfrak{B}_s}) < s$ &~\cite[Theorem 6.9]{beemer2023network}\\
        \hline 
       $\mathfrak{C}_t = ([t,t+1],[t,t]), t \geq 2$ & $C_1(\mathfrak{C}_t, \ad_{\mathfrak{C}_t}) \leq 1$ & $C_1(\mathfrak{C}_t, \ad_{\mathfrak{C}_t}) = 1$ &~\cite[Theorem 7.8]{beemer2023network}\\
       \hline
       $\mathfrak{D}_t = ([2t,2t],[1,1]), t \geq 1$ & $C_1(\mathfrak{D}_t, \ad_{\mathfrak{D}_t}) \leq 1$ & $C_1(\mathfrak{D}_t, \ad_{\mathfrak{D}_t}) = 1$ &~\cite[Theorem 7.11]{beemer2023network}\\
       \hline 
       $\mathfrak{E}_t = ([t,t+1],[1,1]), t \geq 1$ & $C_1(\mathfrak{E}_t, \ad_{\mathfrak{E}_t}) \leq 1$ & $C_1(\mathfrak{E}_t, \ad_{\mathfrak{E}_t}) < 1$ &~\cite[Theorem 6.15]{beemer2023network}\\
       \hline 
    \end{tabular}}
    \caption{One-shot Capacity of Families of Networks~\cite[Section~VI and~VII]{beemer2023network}.}
    \label{tab:my_label}
\end{table}

 The one-shot capacities of families $\mathfrak{A}_t, \mathfrak{B}_s,$ and $\mathfrak{E}_t$ have not been computed for large alphabets. Using~\cite[Theorem 1 and Lemma 5]{kurz2024capacity}, the capacity of $\mathfrak{B}_s$ for small alphabets is computed and presented in~\cite[Table 3]{kurz2024capacity}; the capacity computed is shown to be better than predicted by the conjecture on the capacity of $\mathfrak{B}_s$ in~\cite[Section~VI]{beemer2023network}.

\section{Multishot Capacity and Restricted Adversaries}\label{multishot1}

Our work focuses on the multishot capacity of an adversarial network. The capacity of adversarial networks in multiple rounds is linked to the $i$-th power channel. The $i$\textbf{-th power} of a channel $\Omega : \mathcal{X} \dashrightarrow \mathcal{Y}$ is the channel
$\Omega^i:\mathcal{X}^i \dashrightarrow \mathcal{Y}^i$
where $\Omega^i:= \Omega \times \dotsm \times \Omega$ is the $i$-fold product and $\Omega^i$ models $i$ uses of a network, see~\cite[Definition 10]{ravagnani2018}. A formal definition of the multishot capacity that extends~\cite[Definition 5]{beemer2023network} is provided below.
\begin{definition}[{\cite[Definition II.7]{multishotND}}]
    Let $i$ be a positive integer. The \textbf{$i$-shot capacity}~$C_i(\mN, \alp, \ad_{\mN})$ of an adversarial network $(\mN,\ad_{\mN})$ over an alphabet $\alp$ is the maximum $\alpha\in\R$ such that there exists an unambiguous code $C$ for $\Omega^i$ with $\smash{\alpha=\frac{\log_{|\mA|}(|C|)}{i}}$, meaning,
    \[C_i(\mN, \alp, \ad_{\mN}) = \max \left \{\alpha = \frac{\log_{|\mA|}(|C|)}{i}\,:\, C \hbox{ is an unambiguous code for each } \Omega^i \right\}.\]

\end{definition}
The multishot capacity ($i$-shot capacity) is an extension of the one-shot capacity in Definition \ref{oneshot}, where $i=1$ recovers the original definition.
Intuitively, it can be thought of as the maximum number of symbols that can be sent over $i$ transmission rounds without errors.  Throughout the paper, we consider the following two adversarial models. In each model, we assume that the adversary can attack up to $t$ edges chosen from the set of vulnerable edges. The models are the following:

\begin{enumerate}[label=A.\arabic*]
    \item\label{scenario1} The adversary attacks the same $t$ edges over $i$ uses of the network. 
    \item\label{scenario2} The adversary can change the $t$ edges to attack over $i$ uses of the network.
\end{enumerate}

\subsection{Multishot Capacity of the Diamond Network(s)}

We start by discussing the multishot capacity of the Diamond Network $\mD$ in Scenario~\ref{scenario1} and Scenario~\ref{scenario2}. The capacity of these networks was computed in~\cite{multishotND}.

\paragraph{Scenario~\ref{scenario1} for $\mD$.} In this scenario, the adversary can corrupt an edge in~$\{e_1,e_2,e_3\}$ of $\mD$ and cannot change the edge attacked. We start by noticing that using the strategy proposed in~\cite[Proposition III.1]{curious2021diamond}, one can easily show that $C_i(\mathcal{D}, \ad_{\mathcal{D}})~\geq~\textup{log}_{|\alp|}(|\alp|-1)$. In~\cite[Section~III]{multishotND}, we compute the $i$-shot capacity of $\mD$ and show that the previous bound is far from being achieved. In particular, the following holds.

\begin{proposition}\label{diam1} Let $\alp$ be an alphabet and $\ad_{\mD}$ be an adversary capable of corrupting an edge from the set $\{e_1,e_2,e_3\}$ of $\mD$. Then, the \textit{i-shot capacity} of $\mD$ in Scenario~\ref{scenario1} is 
\[\C_i(\mD,\alp, \ad_\mD)=\frac{\log_{|\mA|}(|\mA|^i-1)}{i}.\]
\end{proposition}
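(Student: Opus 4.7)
The plan is to reduce the $i$-shot problem on $\mD$ under Scenario~\ref{scenario1} to a one-shot problem on $\mD$ over the enlarged alphabet $\alp':=\alp^i$, and then invoke the known one-shot capacity $C_1(\mD,\alp',\ad_{\mD})=\log_{|\alp'|}(|\alp'|-1)$ of~\cite[Theorem~13]{BEEMER202236}. The identity
\[
\log_{|\alp'|}(|\alp'|-1)\;=\;\log_{|\alp|^i}(|\alp|^i-1)\;=\;\frac{\log_{|\alp|}(|\alp|^i-1)}{i}
\]
shows that the one-shot formula over $\alp'$ is precisely the target value, so this reduction is what drives the whole proof.

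First I would set up the bijection. Over $i$ uses of $\mD$, every edge carries a vector in $\alp^i$, which we relabel as a single symbol of $\alp'$. Under this relabeling, outer codes $C\subseteq(\alp^{\deg^+(S)})^i$ correspond to outer codes $C'\subseteq(\alp')^{\deg^+(S)}$, and the most general $i$-shot network code (maps $\alp^{\deg^-(V)\cdot i}\to\alp^{\deg^+(V)\cdot i}$ at each $V\in\{V_1,V_2\}$) corresponds to a generic one-shot network code over $\alp'$. The critical point is that in Scenario~\ref{scenario1} the adversary commits to one edge $e\in\{e_1,e_2,e_3\}$ before transmission and may then write any vector of $\alp^i$ on $e$; this is exactly the one-shot adversary over $\alp'$ corrupting a single edge. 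Hence $\Omega^i$ over $\alp$ and $\Omega$ over $\alp'$ coincide under the identification, and unambiguity transfers verbatim.

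With the bijection in place, both inequalities follow. The one-shot capacity-achieving scheme from~\cite{BEEMER202236} applied to $\alp'$ unfolds to an unambiguous $i$-shot code of size $|\alp|^i-1$, giving $\C_i(\mD,\alp,\ad_{\mD})\geq \log_{|\alp|}(|\alp|^i-1)/i$. Conversely, any unambiguous $i$-shot code pulls back to an unambiguous one-shot code over $\alp'$, whose size is at most $|\alp'|-1=|\alp|^i-1$, providing the matching upper bound. The delicate step is verifying the exactness of the adversary correspondence: it is precisely the Scenario~\ref{scenario1} restriction that the attacked edge is fixed across rounds which turns the multishot adversary action on that edge into an arbitrary single-symbol replacement in $\alp'$. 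If the adversary could re-target between rounds (Scenario~\ref{scenario2}), this matching breaks, the reduction fails, and, as the introduction anticipates, no multishot gain is expected.
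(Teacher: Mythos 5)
Your reduction is correct, and it is a genuinely different route from the paper's. The paper (see the analogous Proposition~\ref{boundE} specialized to $t=1$, $b=1$, and the cited~\cite{multishotND}) proceeds directly: it exhibits an explicit outer code $\{(a\mid a\mid a): a\in\alp^i\setminus\{(\star,\dots,\star)\}\}$ together with an intermediate-node processing that outputs the reserved block $(\star,\dots,\star)$ upon a detected mismatch, then proves the matching upper bound by assuming an unambiguous code of size $|\alp|^i$, using a pigeonhole/Hamming-weight argument to manufacture two codewords whose fan-out sets intersect. You instead treat $\alp':=\alp^i$ as a block alphabet and observe that, under Scenario~\ref{scenario1}, the $i$-shot adversarial channel on $\mD$ over $\alp$ coincides exactly with the one-shot adversarial channel on $\mD$ over $\alp'$, after which the normalization $\log_{|\alp'|}(|\alp'|-1) = \tfrac{1}{i}\log_{|\alp|}(|\alp|^i-1)$ delivers the result. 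This buys a much shorter proof and isolates precisely why Scenario~\ref{scenario1} is special (the fixed attacked edge makes the $i$-round corruption a single super-symbol replacement), at the cost of invoking~\cite[Theorem~13]{BEEMER202236} as a black box over the (non-field, size-$|\alp|^i$) alphabet $\alp'$; fortunately that theorem is proved for arbitrary alphabets, so the citation is legitimate. Two points are worth making explicit if you flesh this out: (i) the reduction is an equality of channels only because an $i$-shot network code is permitted to process the full $i$-round block at each intermediate node (i.e., $\mF_V:\alp^{i\deg^-(V)}\to\alp^{i\deg^+(V)}$); this is how the paper itself builds its achievability schemes (cf.\ Proposition~\ref{boundE}), but a naive reading of $\Omega^i=\Omega\times\cdots\times\Omega$ as fixing a single one-shot $\mF$ per round would break the lower-bound direction of your correspondence; (ii) the paper's direct argument, unlike yours, carries over to families such as $\mathfrak{E}_t$ whose one-shot capacity is not known in closed form, which is why the paper does not adopt the block-alphabet shortcut in general.
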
 

An intuitive idea of why the $i$-shot capacity of the Diamond Network $\mD$ increases over multiple uses is that we can reserve a vector $(\star, \ldots, \star) \in \alp^i$, an extension of the strategy provided in~\cite[Section~3]{BEEMER202236}. It is clear that as $i$ grows, the larger the capacity, asymptotically going to $1.$ Thus, there is a gain in capacity over multiple uses of the Diamond Network~$(\mathcal{D}, \ad_{\mathcal{D}})$.

\paragraph{Scenario~\ref{scenario2} for $\mD$.}
In this scenario, the adversary $\ad_{\mD}$ can attack an edge in the set~$\{e_1,e_2,e_3\}$ and can change the edge attacked each transmission round. In~\cite[Section~IV]{multishotND}, we show that the value of the multishot capacity of $\mD$ is the same as the value of the one-shot capacity of $\mD.$ In particular, the following holds.

\begin{proposition}\label{diam2} Let $\alp$ be an alphabet and $\ad_{\mathcal{D}}$ be an adversary in Scenario~\ref{scenario2}. Then, the $i$-shot capacity of $\mD$ in Scenario~\ref{scenario2} is \[\C_i(\mD, \alp,\ad_\mD)=\log_{|\alp|}(|\mA|-1) = C_1(\mD, \alp, \ad_{\mD}).\]
\end{proposition}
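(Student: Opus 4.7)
The plan is to establish matching lower and upper bounds on $|C|$ for $i$-shot unambiguous codes $C$. The lower bound is immediate from the structure of Scenario~\ref{scenario2}: since the adversary independently selects its target edge in each round, the channel factors as a Cartesian product $\Omega^i(x_1,\dots,x_i) = \Omega(x_1)\times\cdots\times\Omega(x_i)$. Taking $C = C_1^i$, where $C_1 \subseteq \alp^3$ is an optimal one-shot unambiguous code of size $|\alp|-1$ from~\cite[Section~3]{BEEMER202236}, produces an $i$-shot unambiguous code of size $(|\alp|-1)^i$: for distinct codewords the coordinatewise non-empty intersection fails in the round where they differ. This yields $\C_i(\mD, \alp, \ad_\mD) \geq \log_{|\alp|}(|\alp|-1)$.

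For the upper bound, the same product factorization reformulates $i$-shot unambiguity as the condition that every distinct pair $x,y\in C$ be one-shot distinguishable in some round $j$. I plan to construct, for any network code $\mathcal{F}$, a labeling $p:\alp^3\to L$ with $|L|\leq|\alp|-1$ such that $p(a)=p(b)$ forces $a,b$ to be one-shot confusable. Given such $p$, the map $x\mapsto(p(x_1),\dots,p(x_i))$ is injective on $C$: if two codewords had equal image, they would be one-shot confusable in every round, contradicting $i$-shot unambiguity. This directly yields $|C|\leq|L|^i\leq(|\alp|-1)^i$. A natural candidate is $p(a):=\mathcal{F}_{V_2}(a_2,a_3)$, grouping source triples by what $V_2$ would push on $e_5$ under no attack; elements in the same fiber are made confusable by corrupting $e_1$ to align the $V_1$-outputs. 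This gives $|L|\leq|\alp|$, so only one additional merge is needed.

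The main obstacle is closing the gap from $|\alp|$ to $|\alp|-1$ labels, uniformly over $\mathcal{F}$. The goal is to show that two $p$-classes can always be merged into a single clique of the one-shot confusability graph. I would argue by contradiction: if no two $p$-classes were fully mutually confusable, then a careful selection (a Hall-type matching or direct case analysis on the fibers of $\mathcal{F}_{V_2}$) should produce $|\alp|$ pairwise one-shot distinguishable elements in $\alp^3$, contradicting the one-shot capacity bound $|\alp|-1$ of~\cite{BEEMER202236}. I expect this selection step to be the most delicate part of the argument, because ``pairwise mergeable'' is strictly weaker than ``fully mergeable'' and a priori does not yield a transversal; the proof should therefore lean on the asymmetric topology of $\mD$ (one edge to $V_1$, two to $V_2$) to pin down the confusability pattern and force the existence of a mergeable pair.
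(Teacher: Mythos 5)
Your lower bound is clean and correct. In Scenario~\ref{scenario2} the adversary acts independently per round, so the multishot channel is the $i$-fold Cartesian product of the one-shot channel, and the repetition-of-an-optimal-one-shot-code argument gives $|C|\ge(|\alp|-1)^i$. This matches the strategy the paper itself describes ("reserving a symbol $\star$ and using the strategy provided in~\cite[Section~3]{BEEMER202236} $i$ times"); it is also exactly the content of Proposition~\ref{lwpord}. Note also that the paper does not contain a self-contained proof of Proposition~\ref{diam2}: it defers to~\cite[Section~IV]{multishotND}, so what follows is an assessment of your argument on its own terms.

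Your upper-bound framework is sound in principle: if for every round $j$ you can produce a map $p_j:\alp^3\to L_j$ with $|L_j|\le|\alp|-1$ whose fibers are cliques of the one-shot confusability graph $G_{\mathcal{F}_j}$, then $x\mapsto(p_1(x_1),\dots,p_i(x_i))$ is injective on any $i$-shot unambiguous code and $|C|\le(|\alp|-1)^i$. Your candidate $p(a)=\mathcal{F}_{V_2}(a_2,a_3)$ is a valid $|\alp|$-clique cover: any two source triples mapping to the same $V_2$-output become confusable by corrupting $e_1$. That much is fine. However, the remaining step—merging two fibers to reach $|\alp|-1$ cliques, uniformly over all $\mathcal{F}$—is precisely what is needed, and it is left open. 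This is not a small gap: it is the entire content of the upper bound. Saying $\alpha(G_\mathcal{F})\le|\alp|-1$ (the one-shot result) gives no control whatsoever on the clique cover number $\bar\chi(G_\mathcal{F})$; in general $\alpha(G)\le\bar\chi(G)$ with strict inequality possible, and the Shannon capacity of $G$ can even exceed $\alpha(G)$, so you cannot silently substitute one invariant for the other.

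Moreover, the contradiction you sketch does not close the gap. Write $R(a_2,a_3)$ for the set of $V_2$-outputs reachable from $(a_2,a_3)$ by corrupting one of $e_2,e_3$ (including the honest output). Two triples $a,b$ lying in distinct $p$-fibers $c_a\ne c_b$ are confusable if and only if $c_a\in R(b_2,b_3)$, or $c_b\in R(a_2,a_3)$, or $\mathcal{F}_{V_1}(a_1)=\mathcal{F}_{V_1}(b_1)$ together with $R(a_2,a_3)\cap R(b_2,b_3)\ne\emptyset$. Thus two fibers $c_1,c_2$ fail to merge exactly when there exist $a\in\mathcal{F}_{V_2}^{-1}(c_1)$ with $c_2\notin R(a)$ and $b\in\mathcal{F}_{V_2}^{-1}(c_2)$ with $c_1\notin R(b)$. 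Suppose no pair merges: for each ordered pair $(c,c')$ you get one witness. To derive the contradiction you want ($|\alp|$ pairwise one-shot distinguishable triples, beating the one-shot bound) you would have to select, for each $c$, a single representative $a^{(c)}$ in the $c$-fiber with $c'\notin R(a^{(c)})$ \emph{for every} $c'\ne c$ simultaneously, i.e.\ $R(a^{(c)})=\{c\}$. The failure-to-merge hypothesis only supplies, for each $c'$, a possibly different witness avoiding that one $c'$; it does not supply a single element whose $R$-set is a singleton. So the inference you hope for is not valid as stated. (You do flag that "pairwise mergeable'' is weaker than "fully mergeable'', but the issue is one level deeper: even the list of witnesses you get from "no pair merges'' need not assemble into a transversal.)

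In short: the lower bound is correct, the reduction-to-clique-cover framework is a legitimate sufficient condition, but the proof of the crucial inequality $\bar\chi(G_{\mathcal{F}})\le|\alp|-1$ for every network code $\mathcal{F}$ is missing, and the contradiction you propose for it does not go through. You would either have to carry out a genuine case analysis on $\mathcal{F}_{V_2}$ (using the structure of the sets $R(a_2,a_3)$ and exploiting that they always contain the honest output) to exhibit two mergeable fibers, or abandon the clique-cover route in favor of a direct argument of the kind the paper uses for the families $\mathfrak{C}_t$, $\mathfrak{D}_t$, $\mathfrak{E}_t$ (Remarks~\ref{conC}, \ref{conD}, \ref{ufamE}): assume $|C|=(|\alp|-1)^i+1$, use pigeonhole on the projections to the last $i-1$ rounds to extract two codewords that agree outside round one, and reach a contradiction with one-shot unambiguity in round one.
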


Hence, over $i$ uses of $\mD$, we see that there is no gain in capacity. This result follows from the fact that the adversary is free to change the edge attacked each transmission round. Therefore,  the strategy provided in Scenario~\ref{scenario1} cannot be applied. The best strategy we can provide for $\mD$ in Scenario~\ref{scenario2} is reserving a symbol $\star$ and using the strategy provided in~\cite[Section~3]{BEEMER202236} $i$ times, which is drastically different from the result obtained in Scenario~\ref{scenario1}.\\

We now investigate the $i$-shot capacity of the Mirrored Diamond Network $\mS$ in Scenario \ref{scenario1} and Scenario \ref{scenario2}. 

\paragraph{Scenario \ref{scenario1} for $\mS$.} In this scenario, the adversary can attack an edge in the set of edges~$\{e_1,e_2,e_3,e_4\}$ and cannot change the edge attacked. Using~\cite[Proposition~12]{ravagnani2018}, it immediately follows that $C_i(\mS,\alp, \ad_{\mS}) \leq 1$. In~\cite[Section~IV]{multishotND} we proved that the value of the $i$-shot capacity of $\mS$ is the same as the one-shot capacity of $\mS$ indicating that using $\mS$ multiple times does not provide a gain in capacity in comparison to the one-shot capacity. In particular, the following holds.

\begin{proposition}\label{umd} Let $\alp$ be an alphabet and $\ad_{\mS}$ be an adversary able to corrupt up to one edge of the set of outgoing edges of the source of $\mS$. Then, the i-shot capacity of $\mS$ in Scenario \ref{scenario1} is \[C_i(\mS, \alp, \ad_{\mS}) = 1.\] 
\end{proposition}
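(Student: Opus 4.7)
My plan is to establish the proposition by combining the already-given upper bound $C_i(\mS,\alp,\ad_{\mS}) \leq 1$ (obtained from~\cite[Proposition~12]{ravagnani2018} as cited immediately before the statement) with a matching lower bound of $1$. The lower bound will come from simply repeating the one-shot strategy: since $C_1(\mS,\alp,\ad_{\mS}) = 1$ by~\cite[Proposition~14]{BEEMER202236}, there exists an unambiguous one-shot outer code $C_0 \subseteq \alp^{\deg^+(S)} = \alp^4$ with $|C_0| = |\alp|$. I would take $C$ to be the $i$-fold Cartesian product $C_0 \times \cdots \times C_0 \subseteq \alp^{4i}$, which has $|C| = |\alp|^i$ codewords and rate $\frac{\log_{|\alp|}(|\alp|^i)}{i} = 1$.

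The only step that needs verification is that $C$ remains unambiguous for $\Omega^i$ under Scenario~\ref{scenario1}. I would argue coordinate-wise: given distinct codewords $c = (c^{(1)},\ldots,c^{(i)})$ and $c' = (c'^{(1)},\ldots,c'^{(i)})$ in $C$, pick a round $j$ with $c^{(j)} \neq c'^{(j)}$. If some $y = (y^{(1)},\ldots,y^{(i)})$ were to lie in $\Omega^i(c) \cap \Omega^i(c')$, the $j$-th component $y^{(j)}$ would have to be an achievable one-shot output from both $c^{(j)}$ and $c'^{(j)}$. The extra restriction of Scenario~\ref{scenario1}, forcing the adversary to attack the same edges in every round, only further restricts the adversary, so each per-round output $y^{(j)}$ still lies in the one-shot reachable set $\Omega(\cdot)$. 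Hence $y^{(j)} \in \Omega(c^{(j)}) \cap \Omega(c'^{(j)})$, which is empty by unambiguity of $C_0$, a contradiction.

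There is no genuine technical obstacle here; the content of the proposition is conceptual, recording that when the one-shot capacity already meets the universal upper bound (here, $1$) there is nothing to gain from the multishot regime, even under the more restricted adversary of Scenario~\ref{scenario1}. This stands in sharp contrast with the Diamond Network in Proposition~\ref{diam1}, where the one-shot rate $\log_{|\alp|}(|\alp|-1) < 1$ leaves headroom and the multishot strategy of reserving the vector $(\star,\ldots,\star) \in \alp^i$ strictly improves the rate.
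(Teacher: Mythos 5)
Your proposal proves only the lower bound, and it attributes the upper bound to a result that does not supply it. Proposition~12 of~\cite{ravagnani2018} (restated in the appendix as Proposition~\ref{lwpord}) asserts $C_1(\Omega^i) \ge i\, C_1(\Omega)$, which upon dividing by $i$ gives $C_i(\mS,\alp,\ad_{\mS}) \ge C_1(\mS,\alp,\ad_{\mS}) = 1$ --- a \emph{lower} bound. The sentence in the paper preceding the proposition (``Using Proposition~12, it immediately follows that $C_i \le 1$'') is a miscitation which you have reproduced, so your argument has no source for $C_i \le 1$. Moreover, the product-code construction you carry out \emph{is} precisely the content of Proposition~12 in this special case; you are re-deriving the lower bound, not supplying a second direction. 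Notice also that your coordinate-wise disjointness argument never actually uses the Scenario~\ref{scenario1} restriction: $\Omega^i(c)\cap\Omega^i(c') = \prod_j \Omega(c^{(j)}) \cap \prod_j \Omega(c'^{(j)})$ is empty whenever some round-$j$ factors are disjoint, and this holds for the unrestricted power channel (Scenario~\ref{scenario2}) as well; passing to the finer Scenario~\ref{scenario1} channel only shrinks the fan-out sets, so unambiguity is preserved. The lower bound is thus fine but unremarkable.

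The nontrivial content of the proposition is the upper bound $C_i(\mS,\alp,\ad_{\mS}) \le 1$, which your proposal does not establish. Since $\mS$ is exactly $\mathfrak{D}_1$ in the family notation $([2t,2t],[1,1])$, one route is to specialize the combinatorial argument of Remark~\ref{conD} and Proposition~\ref{capacityD} to $t=1$: suppose an unambiguous code of size $|\alp|^i+1$ existed for the channel $\h_{\mathfrak{D}_1}^i$; two of its codewords must agree in the first component, forcing all remaining first-components to full Hamming weight, and then pairwise distance constraints in later rounds contradict $C_1(\h_{\mathfrak{D}_1})=1$. Alternatively, one can try to apply the Multishot Cut-Set Bound (Corollary~\ref{mulitcut}) with a mixed edge cut consisting of the two vulnerable edges into $V_1$ together with $V_2$'s outgoing edge, whose per-round channel has one-shot capacity $1$; but one must then argue carefully that the resulting bound still applies under the finer Scenario~\ref{scenario1} channel, since the corollary as stated concerns the standard $i$-th power channel. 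Either way, the upper bound is where the work lies, and the proposal leaves it unaddressed.
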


Therefore, $C_i(\mS, \alp, \ad_{\mS}) = C_1(\mS, \alp, \ad_{\mS}) = 1$ and there is no advantage of using $\mS$ multiple times for communication.

\paragraph{Scenario \ref{scenario2} for $\mS$.} It is not hard to check that, for an adversary capable of attacking up to an edge and can change the edge attacked each transmission round, $C_i(\mS, \ad_{\mS}) = 1$ in Scenario~\ref{scenario2}. This follows by using arguments similar to the ones in the examples in~\cite[Section~IV]{multishotND}, as there is no assumption that the adversary may possibly corrupt the value of one edge of $\{e_1,e_2,e_3,e_4\}$ and the proof of~\cite[Proposition III.8]{multishotND}. The lower bound comes directly from~\cite[Proposition 12]{ravagnani2018}. Therefore, in scenarios \ref{scenario1} and \ref{scenario2}, we have~$C_i(\mS, \ad_{\mS}) = C_1(\mS, \ad_{\mS})$ and there is no gain in capacity of using $\mS$ more than once for communication. 

\subsection{Multishot Capacity of Families of Networks}

In this section, we compute the multishot capacity of families $\mathfrak{C}_t$, $\mathfrak{D}_t$ and $\mathfrak{E}_t$ in Scenario~\ref{scenario1} and Scenario~\ref{scenario2}. The goal is to determine for each family whether there is a gain in using them multiple times for communication versus using them once. For family $\mathfrak{E}_t$, we show that there is a gain in capacity of using $\mathfrak{E}_t$ in Scenario \ref{scenario1} and no gain in Scenario \ref{scenario2}. In contrast, for families $\mathfrak{C}_t$ and $\mathfrak{D}_t$, we show that there is no gain in using these networks multiple times for communication in both Scenarios \ref{scenario1} and \ref{scenario2}\\

\subsubsection{Multishot Capacity of Family $\mathfrak{C}_t$}

In this section, we compute the multishot capacity of Family $\mathfrak{C}_t$. In both scenarios \ref{scenario1} and \ref{scenario2}, we will prove that 
\[C_i(\mathfrak{C}_t, \alp, \ad_{\mathfrak{C}_t}) = 1\] rendering that $C_i(\mathfrak{C}_t, \alp, \ad_{\mathfrak{C}_t}) = C_1(\mathfrak{C}_t, \alp, \ad_{\mathfrak{C}_t})$, demonstrating that there is no gain in using Family $\mathfrak{C}_t$ multiple times for communication. Let $\mathcal{A}$ be an alphabet and let~$\ad_{\mathfrak{C}_t}$ be an adversary capable of corrupting $t$ edges from $\mU_S$, where $\mU_S$ is the set of edges incident with the source $S$. Let \[\Omega^i_{\mathfrak{C}_t}:= \Omega_{\mathfrak{C}_t}[\mathfrak{C}_t, \alp,\mathcal{F}, S \longrightarrow T, \mU_S, t] \times \dotsm \times \Omega_{\mathfrak{C}_t}[\mathfrak{C}_t,\alp,\mathcal{F}, S \longrightarrow T, \mU_S, t]\] be the $i$-th power channel for $\mathfrak{C}_t$. Let $\h_{\mathfrak{C}_t}$ be the channel that describes the action of the adversary where \[\h_{\mathfrak{C}_t} : \alp^{2t+1} \dashrightarrow \alp^{2t+1}  \quad\textup{ defined by}\quad \h_{\mathfrak{C}_t}(x) := \{y \in \mA^{2t+1} \,:\, d_{\h}(x,y) \leq t\}, \forall x \in \alp^{2t+1}.\] 

 It is easy to check that the largest unambiguous code for $\h_{\mathfrak{C}_t}$ is $|\mA|$. This implies that~$C_1(\h_{\mathfrak{C}_t}) = 1$. Next we describe a structural property that implies a code $C$ is unambiguous for $\h_{\mathfrak{C}_t}$. We have that a code $C$ is an unambiguous code for $\h_{\mathfrak{C}_t}^i$ if and only if for~$x,y \in C$ one of $d_{\h}(x^1,y^1) \geq 2t+1, \ldots, d_{\h}(x^i,y^i) \geq 2t+1$. We have the following remark.

\begin{remark}\label{conC} We provide a strategy for computing the size of the largest unambiguous code for~$\h_{\mathfrak{C}_t}^i$, which will be important in showing that $|C| = |\alp|^i$ in Proposition~\ref{ufc}. Assume $C$ is an unambiguous code for $\h^i_{\mathfrak{C}_t}$ with $|C| = |\mA|^i + 1$. Define \[x^1 = (x_1, \ldots, x_{2t+1}), \ldots, x^i = (x_1^i, \ldots, x^i_{2t+1}).\] With the approach that was applied~\cite[Example 9]{beemer2023network}, we claim that there are no two codewords that coincide in the first $|\mA|^i$ components. Let $c_1, \ldots, c_{|\mA|^i+1} \in C$. Without loss of generality, we can assume that for~$c_1, c_2 \in C$, $c_1^1 = c_2^1$ and $c_1 = 0$. This implies that~$c_3^1, \ldots, c_{|\mA|^i + 1}^1$ must have Hamming weight $2t+1$. If not, assume that $c_3^1$ has Hamming weight less than $2t+1$. This allows one of $\{c_1^2, c_2^2, c^2_3\}, \ldots ,\{c_1^i, c_2^i, c_3^i\}$ to be a unambiguous code of Hamming distance at least $2t+1$ and of cardinality $3$, contradicting the original capacity for~$\smash{\h_{\mathfrak{C}_t}}$. Since $\smash{c_3^1, \ldots, c_{|\mA|^i + 1}^1}$ have Hamming weight $2t+1$, a comparison of any two elements gives Hamming distance at most $t+1$. We assume that $C$ is unambiguous for~$\h_{\mathfrak{C}_t}^i$. Therefore,  one of $\smash{\{c_1^2,c_2^2,c_3^2\}, \ldots, \{c_1^i,c_2^i,c_3^i\}}$ has to be an unambiguous code for~$\h_{\mathfrak{C}_t}$ with cardinality $3$ and minimum Hamming distance $2t+1$ contradicting $C_1(\h_{\mathfrak{C}_t}) = 1$. 
\end{remark}

Finally, we are ready to compute the multishot capacity of $\mathfrak{C}_t$.

\begin{proposition}\label{ufc} Let $\alp$ be an alphabet. Recall that $\smash{\ad_{\mathfrak{C}_t}}$ can corrupt up to $t$ edges of the subset $\mU_S$. We have that the $i$-shot capacity of $\mathfrak{C}_t$ is  \[C_i(\mathfrak{C}_t, \alp, \ad_{\mathfrak{C}_t}) = 1.\] 
\end{proposition}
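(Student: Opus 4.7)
The plan is to prove matching upper and lower bounds of $1$ on $C_i(\mathfrak{C}_t, \alp, \ad_{\mathfrak{C}_t})$.

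For the lower bound $C_i(\mathfrak{C}_t, \alp, \ad_{\mathfrak{C}_t}) \geq 1$, I would exhibit an explicit unambiguous code of size $|\alp|^i$ for $\Omega^i_{\mathfrak{C}_t}$ by taking the $i$-fold product of the one-shot repetition code $\{(c,c,\dots,c) \in \alp^{2t+1} : c \in \alp\}$, equipped with the network code in which both $V_1$ and $V_2$ perform majority decoding on their $t$ and $t+1$ incoming edges respectively and forward the recovered symbol along each outgoing edge. Since the adversary flips at most $t < t+1$ symbols per shot, $V_2$ always decodes the shared symbol correctly, so the terminal recovers the source symbol in every shot; the resulting code is unambiguous for $\Omega^i_{\mathfrak{C}_t}$ and has size $|\alp|^i$, yielding $C_i \geq \log_{|\alp|}(|\alp|^i)/i = 1$.

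For the upper bound, let $C$ be unambiguous for $\Omega^i_{\mathfrak{C}_t}$ under any fixed network code $\mF$. The key reduction is that $C$ must also be unambiguous for the cut channel $\h^i_{\mathfrak{C}_t}$: since the adversary's action is confined to the $2t+1$ outgoing edges of the source and the intermediate nodes process the corrupted output deterministically, one has $\Omega^i_{\mathfrak{C}_t} = \mF^i \circ \h^i_{\mathfrak{C}_t}$, so any $z \in \h^i_{\mathfrak{C}_t}(x) \cap \h^i_{\mathfrak{C}_t}(y)$ yields $\mF^i(z) \in \Omega^i_{\mathfrak{C}_t}(x) \cap \Omega^i_{\mathfrak{C}_t}(y)$, contradicting $\Omega^i$-unambiguity. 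I then apply the strategy of Remark~\ref{conC} via the projection $\pi : \alp^{i(2t+1)} \to \alp^i$ sending a codeword $c$ with shots $c^1, \dots, c^i \in \alp^{2t+1}$ to the tuple of first coordinates $(c^1_1, c^2_1, \dots, c^i_1)$. If $\pi(c) = \pi(d)$ for $c, d \in C$, then in every shot $j$ the words $c^j, d^j$ agree in position $1$, hence $d_{\h}(c^j, d^j) \leq 2t < 2t+1$; by the characterization of $\h^i_{\mathfrak{C}_t}$-unambiguity recalled just before Remark~\ref{conC}, this forces $c = d$. Thus $\pi$ is injective on $C$ and $|C| \leq |\alp|^i$, giving $C_i \leq 1$.

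The main obstacle is to ensure the argument is valid in both Scenarios~\ref{scenario1} and~\ref{scenario2}. In Scenario~\ref{scenario2} the reduction step is immediate. In Scenario~\ref{scenario1}, where the adversary must attack the same $t$ edges in every shot, one must verify that any two codewords identified by $\pi$ can still be made ambiguous by a valid Scenario~\ref{scenario1} adversary. This is handled by fixing any partition $\{2,\dots,2t+1\} = T_c \sqcup T_d$ with $|T_c| = |T_d| = t$ and using the shot-independent supports $T_c$ and $T_d$ for the error patterns attacking $c$ and $d$ respectively; the per-shot values of the errors are read off from the entries of $d^j - c^j$ on $T_c$ and $T_d$. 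These patterns witness $\h^i$-ambiguity and, through $\mF^i$, also $\Omega^i$-ambiguity, closing the argument in the restricted scenario as well.
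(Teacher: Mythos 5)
Your upper bound argument is correct and in fact cleaner than the paper's treatment: reducing $\Omega^i$-unambiguity to $\h^i$-unambiguity via the composition $\Omega^i = \mF^i \circ \h^i$, and then observing that the projection $\pi$ onto the first coordinate of each shot must be injective (since agreement of $c^j$ and $d^j$ in position $1$ forces $d_H(c^j,d^j)\le 2t$ for every shot $j$), gives $|C|\le|\alp|^i$ directly. The paper obtains the same inequality through Remark~\ref{conC}, which is more convoluted (it argues about Hamming weights of $c_3^1,\ldots$ being $2t+1$). Your extra care in checking that two codewords with the same $\pi$-image can be confounded by an adversary with shot-independent supports (so that the bound holds in Scenario~\ref{scenario1} as well as~\ref{scenario2}) is a useful sharpening that the paper leaves implicit.

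However, the lower bound argument has a genuine gap. You propose that $V_1$ and $V_2$ each majority-decode their inputs, and claim ``since the adversary flips at most $t<t+1$ symbols per shot, $V_2$ always decodes the shared symbol correctly.'' This is false for $t\ge 2$ (the range in which $\mathfrak{C}_t$ is defined). The adversary is free to concentrate all $t$ corruptions on the $t+1$ edges entering $V_2$, setting them all to a single symbol $a'\ne a$; then $V_2$ receives $t$ copies of $a'$ and only one copy of $a$, and majority decoding at $V_2$ outputs $a'$. Meanwhile $V_1$, seeing $t$ uncorrupted copies of $a$, outputs $a$. Since the roles of $a$ and $a'$ are symmetric (corrupt $V_1$'s $t$ edges on the transmission of $a'$ instead), the terminal sees exactly the same pair of outputs in both cases and cannot distinguish; the repetition code is therefore \emph{not} unambiguous under your scheme. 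The fact that $C_1(\mathfrak{C}_t,\alp,\ad_{\mathfrak{C}_t})=1$ requires a more delicate network code (this is \cite[Theorem~7.8]{beemer2023network}, which is cited rather than reproved in this paper), and the correct route for the $i$-shot lower bound is the one the paper takes: cite that one-shot result and invoke Proposition~\ref{lwpord} (\cite[Proposition~12]{ravagnani2018}), which gives $C_1(\Omega^i)\ge i\cdot C_1(\Omega)$ and hence $C_i\ge C_1=1$.
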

\begin{proof} We first show that $|C| \leq |\mA|^i$. The action the adversary may take is described by~$\h_{\mathfrak{C}_t}$. Let $\mathcal{F}$ be a network code for $\mathfrak{C}_t$. We will use $\mathfrak{C}_t$ in $i$ transmission rounds with network codes $\mathcal{F}_1, \ldots, \mathcal{F}_i$ respectively.  For $x = (x_1, \ldots, x_{(2t+1)i}) \in \mA^{2t+1} \times \dotsm \times \mA^{2t+1},$ we have that \[\Omega^i_{\mathfrak{C}_t}(x) := \h_{\mathfrak{C}_t}^i \left( \mathcal{F}^1_V(x_1,\ldots,x_{2t+1}), \ldots, \mathcal{F}^i(x_{(2t+1)(i-1)+1}, \ldots, x_{(2t+1)i} \right).\] Assume that there exists an unambiguous code $C$ with $|C| = |\mA|^i + 1$. Then, there exists an unambiguous code \[C' := \{\mF^1(x_1,\ldots,x_{2t+1}), \ldots, \mF^i(x_{(2t+1)(i-1)+1},\ldots,x_{(2t+1)i}))\} \subseteq  \alp^{2t+1} \times \dotsm \times \alp^{2t+1},\] which is unambiguous for $\h_{\mathfrak{C}_t}^i$ of cardinality $|\alp|^i + 1$. Since we have that $|C'| = (|\mA|)^i + 1,$ there must exist $y,y' \in C$, with $y \neq y'$ such that they coincide in the first $|\mA|^i$ components. This implies that~$C'$ is an unambiguous code for $\h_{\mathfrak{C}_t}^i$, which has two different codewords that coincide in the first $|\alp|^i$ components. However, it was shown in Remark \ref{conC}, such a code does not exist. Therefore,  $|C| \leq |\alp|^i$. 

It remains to show that $C_i(\mathfrak{C}_t, \alp, \ad_{\mathfrak{C}_t}) \geq 1$. This immediately follows from~\cite[Proposition~12]{ravagnani2018} using the channel $\Omega^i_{\mathfrak{C}_t}$ associated to $\mathfrak{C}_t$. We have that \[C_i(\Omega_{\mathfrak{C}_t}, \alp, \ad_{\mathfrak{C}_t}) = \frac{C_1(\Omega^i_{\mathfrak{C}_t}, \alp, \ad_{\mathfrak{C}_t})}{i} \geq \frac{i\; C_1(\Omega_{\mathfrak{C}_t}, \alp, \ad_{\mathfrak{C}_t})}{i} = C_1(\Omega_{\mathfrak{C}_t}, \alp, \ad_{\mathfrak{C}_t}) = 1,\]
as claimed.
\end{proof}

Notice that the proof of the multishot capacity of $\mathfrak{C_t}$ uses the same strategy as the proof of the multishot capacity of $\mS$. Moreover, there is no gain in using $\mathfrak{C}_t$ multiple times for communication in both scenarios.\\

\subsubsection{Multishot Capacity of $\mathfrak{D}_t$}

In this section, we compute the multishot capacity of Family $\mathfrak{D}_t$. In both scenarios \ref{scenario1} and \ref{scenario2}, we will prove that 
\[C_i(\mathfrak{D}_t, \alp, \ad_{\mathfrak{D}_t}) = 1.\]

Suppose that an adversary $\ad_{\mathfrak{D}_t}$ can corrupt up to $t$ edges of $\mU_S$, where $\mU_S$ is the set of edges incident with the source $S$, and may change the edges attacked each transmission round. Let $\mathcal{A}$ be an alphabet and let \[\Omega^i_{\mathfrak{D}_t}:= \Omega_{\mathfrak{D}_t}[\mathfrak{D}_t,\alp,\mathcal{F}, S \longrightarrow T, \mU_S, t] \times \dotsm \times \Omega_{\mathfrak{D}_t}[\mathfrak{D}_t,\alp,\mathcal{F}, S \longrightarrow T, \mU_S, t]\] be the $i$-th power channel for $\mathfrak{D}_t$ that represents $i$ uses of $\mathfrak{D}_t$. Let $\h_{\mathfrak{D}_t}$ be the channel that describes the action of the adversary with \[\h_{\mathfrak{D}_t} : \alp^{4t} \dashrightarrow \alp^{4t} \quad\textup{ defined by}\quad \h_{\mathfrak{D}_t}(x) := \{y \in \mA^{4t}\,:\, d_{\h}(x,y) \leq t\},\] for any $x\in \alp^{4t}$. We note that the largest unambiguous code for $\h_{\mathfrak{D}_t}$ is $|\mA|$ and there is no larger code. This implies that $C_1(\h_{\mathfrak{D}_t}) = 1$. Next, we describe a structural property that implies a code $C$ is unambiguous for $\h_{\mathfrak{D}_t}$. We find that a code $C$ is unambiguous for $\h_{\mathfrak{D}_t}^i$ if and only if one of~$d_{\h}(x^1,y^1) \geq 3t, \ldots, d_{\h}(x^i,y^i) \geq 3t$. We have the following remark.

\begin{remark}\label{conD} We provide a strategy for the largest unambiguous code for $\mathfrak{D}_t$ and show that it is $|C| = |\mA|^i$. We let $C$ be any unambiguous code for $\h^i_{\mathfrak{D}_t}$ with $|C| = |\mA|^i + 1$. Define~$x^1~=~(x_1, \ldots, x_{4t}), \ldots, x^i = (x_1^i, \ldots, x^i_{4t})$. With the approach that was applied in Remark \ref{conC} and~\cite[Example 9]{ravagnani2018}, we wish to show that there are no two codewords that coincide in the first $|\mA|^i$ components. Let $c_1, \ldots, c_{|\mA|^i+1} \in C$. Without loss of generality, we can assume that for $c_1, c_2 \in C$, $c_1^1 = c_2^1$ and $c_1 = 0$. Notice that the proof is the same as in Remark \ref{conC} with the exception that $\smash{c_3^1, \ldots, c_{|\mA|^i + 1}^1}$ must have Hamming weight $3t$. If this is not the case and we assume $C$ is unambiguous for $\h_{\mathfrak{D}_t}^i$, one can find an unambiguous code for $\h_{\mathfrak{D}_t}$ with cardinality $3$ and minimum Hamming distance $3t$ contradicting the capacity of $\h_{\mathfrak{D}_t}$.

\end{remark}

We are now ready to compute the multishot capacity of Family $\mathfrak{D}_t$ in both scenarios.

\begin{proposition}\label{capacityD} Let $\alp$ be an alphabet and recall that the adversary $\ad_{\mathfrak{D}_t}$ can corrupt up to $t$ edges of $\mU_S$. Then, the i-shot capacity of $\mathfrak{D}_t$ is \[C_i(\mathfrak{D}_t, \alp,\ad_{\mathfrak{D}_t}) = 1.\] 
\end{proposition}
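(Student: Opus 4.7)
The plan is to mirror the proof of Proposition \ref{ufc} almost verbatim, swapping the role of $\h_{\mathfrak{C}_t}$ for $\h_{\mathfrak{D}_t}$ and using the parameters appropriate to $\mathfrak{D}_t$ (length $4t$ in place of length $2t+1$, and minimum distance threshold $3t$ in place of $2t+1$). The argument splits into the usual two inequalities.

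For the upper bound $C_i(\mathfrak{D}_t,\alp,\ad_{\mathfrak{D}_t})\leq 1$, I would fix a network code $\mathcal{F}$ for $\mathfrak{D}_t$ with per-round functions $\mathcal{F}^1,\ldots,\mathcal{F}^i$ and note that any unambiguous code $C$ for $\Omega_{\mathfrak{D}_t}^i$ induces, via its image under $(\mathcal{F}^1,\ldots,\mathcal{F}^i)$, a set $C'\subseteq \alp^{4t}\times\cdots\times\alp^{4t}$ that is unambiguous for $\h_{\mathfrak{D}_t}^i$. Suppose toward a contradiction that $|C|=|\alp|^i+1$. Then $|C'|\geq |\alp|^i+1$ as well (distinct source words must yield distinct images, otherwise ambiguity would already arise). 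Remark \ref{conD} explicitly rules out the existence of an unambiguous code for $\h_{\mathfrak{D}_t}^i$ of that size, so we get a contradiction and conclude $|C|\leq |\alp|^i$, i.e.\ $C_i(\mathfrak{D}_t,\alp,\ad_{\mathfrak{D}_t})\leq 1$.

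For the lower bound, I would directly invoke \cite[Proposition~12]{ravagnani2018} applied to the channel $\Omega_{\mathfrak{D}_t}$, which yields
\[
C_i(\mathfrak{D}_t,\alp,\ad_{\mathfrak{D}_t}) \;=\; \frac{C_1(\Omega_{\mathfrak{D}_t}^i,\alp,\ad_{\mathfrak{D}_t})}{i} \;\geq\; \frac{i\,C_1(\Omega_{\mathfrak{D}_t},\alp,\ad_{\mathfrak{D}_t})}{i} \;=\; C_1(\mathfrak{D}_t,\alp,\ad_{\mathfrak{D}_t}) \;=\; 1,
\]
where the last equality is the one-shot capacity recorded in Table~\ref{tab:my_label} from \cite[Theorem~7.11]{beemer2023network}.

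The main (and only) nontrivial ingredient is Remark \ref{conD}, which is essentially the same pigeonhole-plus-Hamming-weight argument used for $\mathfrak{C}_t$ in Remark \ref{conC}: once two codewords $c_1,c_2$ agree in their first round block, all other codewords must have full Hamming weight $3t$ in that block, forcing pairwise distances at most $t+1$ there; an unambiguous triple would then have to sit in one of the remaining rounds and violate $C_1(\h_{\mathfrak{D}_t})=1$. Since that structural statement is already granted in Remark \ref{conD}, the proof above reduces to citing it and then applying the product-channel lower bound, and it carries through uniformly in both Scenario \ref{scenario1} and Scenario \ref{scenario2}, because the upper bound uses no assumption on whether the adversary can change the corrupted edges between rounds, and the lower bound construction is non-adaptive.
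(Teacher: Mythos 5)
Your proposal matches the paper's approach exactly: the paper itself states only that ``the proof of this result is similar to the proof of Family $\mathfrak{C}_t$ in Scenario~\ref{scenario1} using Remark~\ref{conD},'' and your argument is precisely that transcription, reusing Remark~\ref{conD} for the upper bound and \cite[Proposition~12]{ravagnani2018} for the lower bound, with the closing observation (also made by the paper) that the argument is scenario-independent. Your added parenthetical justifying $|C'|=|C|$ is a minor but correct clarification of a step the paper leaves implicit.
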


The proof of this result is similar to the proof of Family $\mathfrak{C}_t$ in Scenario~\ref{scenario1} using Remark~\ref{conD}.  Therefore, there is no gain in using $\mathfrak{D}_t$ multiple times for communication. Notice that the proof does not depend on whether the adversary does or does not change the edges attacked each transmission round and therefore applies in both scenarios. Thus, we have that~$C_i(\mathfrak{C}_t, \alp, \ad_{\mathfrak{C}_t}) = C_i(\mathfrak{D}_t, \alp, \ad_{\mathfrak{D}_t}) = 1$ in Scenario \ref{scenario2}.

This shows there is no strategy for Family $\mathfrak{C}_t$ and Family $\mathfrak{D}_t$ that provides a gain in capacity in Scenario \ref{scenario1}.

\subsubsection{Multishot Capacity of $\mathfrak{E}_t$}

In this section, we compute the multishot capacity of Family $\mathfrak{E}_t$. In Scenario~\ref{scenario1}, we will prove that 

\[C_i(\mathfrak{E}_t, \alp, \ad_{\mathfrak{E}_t}) = \frac{\log_{|\alp|}(|\alp|^i - b)}{i},\] showing a gain in capacity over $i$ uses of $\mathfrak{E}_t$. In contrast, we will prove that in Scenario~\ref{scenario2}, 
\[C_i(\mathfrak{E}_t, \alp, \ad_{\mathfrak{E}_t}) = \log_{|\alp|}(|\alp| - b),\] rendering no gain in using $\mathfrak{E}_t$ over $i$ transmission rounds. 

The results for Family $\mathfrak{E}_t$ suggest that calculating the multishot capacity, devising a capacity-achieving strategy for an adversarial network, and determining whether there is an increase in capacity over multiple uses is feasible without prior knowledge of the one-shot capacity.

\paragraph{Scenario~\ref{scenario1} for $\mathfrak{E}_t$.} Assume that an adversary $\ad_{\mathfrak{E}_t}$ can attack up to $t$ edges of the subset $\mU_s,$ where $\mathcal{U}_S$ is the set of edges incident with the source $S$, but cannot change the edges attacked each transmission round. In~\cite[Theorem 6.15]{beemer2023network}, the authors proved that \[C_1(\mathfrak{E}_t, \alp, \ad_{\mathfrak{E}_t}) < 1.\] 

Let $B$ be a set of reserved vectors in $\alp^{2t+1}$ with $|B| = b$. Let $C \subseteq \alp^{2t+1}$ be the largest unambiguous code for $\mathfrak{E_t}$ for $\Omega[\mathfrak{E}_t, \mA, \mF, \textup{out}(S)\to \textup{in}(T), \mathcal{U}_S,t]$, with 
\begin{equation}\label{eq:Cb}
    |C| = |\alp| - b.
\end{equation}
Suppose that there exists a capacity-achieving pair $(C,\mF)$. We have that \[C_1(\mathfrak{E}_t, \alp, \ad_{\mathfrak{E}_t}) = \log_{|\mA|}(|\mA| - b).\] The goal is to show that \begin{equation*}
     C_i(\mathfrak{E}_t, \alp, \ad_{\mathfrak{E}_t}) = \frac{\log_{|\mA|}(|\mA|^i - b)}{i}.
\end{equation*}
We define the sets
\begin{align*}
    B^i&:=\{\underbrace{(k \mid \dots \mid k)}_{i\textup{-times}} \,:\, k \in B\},\\
    B_1^i&:=\{(k_1,\ldots,k_t \mid \dots \mid k_1,\ldots,k_t) \,:\, k \in B\},\\
    B_2^i&:=\{(k_{t+1},\ldots,k_{2t+1}\mid  \dots \mid k_{t+1},\ldots,k_{2t+1}) \,:\, k \in B\},
\end{align*}
where ``$\,\mid\,$'' denotes the vector concatenation. Note that $|B^i| = |B_1^i| = |B_2^i| = b$. Let $C \subseteq \mA^{a_1 + a_2 +\dots+a_n}$. For a given $x \in C$, we denote by $B_t^{\h}(x)$ the Hamming ball of radius $t$ with center $x$ and $S_t^{\h}(x)$ the shell of that ball, meaning,
\begin{equation*}
B_t^{\h}(x) = \{y \in \mA^{a_1 + a_2 +\dots+a_n}\,:\, d_{\h}(x,y) \leq t\}, S_t^{\h}(x) = \{y \in B_t^{\h}(x) \,:\, d_{\h}(x, y) = t\}.
\end{equation*} Let  $\pi_k^i : \mA^{(2t+1)i} \to \mA^i$ be the projection onto the coordinates corresponding to the edges to intermediate node $V_k$ of the simple $2$-level network $\mN.$ The following result provides a construction of the maximum unambiguous code for $\mathfrak{E}_t$ with cardinality $|\alp|^i-b$.

\begin{proposition}\label{boundE} Let $\alp$ be an alphabet, $\mU_S$ be the set of edges incident with the source~$S$ and let $\ad_{\mathfrak{E}_t}$ be an adversary able to corrupt up to $t$ edges of $\mU_S$ of $\mathfrak{E}_t$. Let $b$ be as in~\eqref{eq:Cb}. Then, the $i$-shot capacity of $\mathfrak{E}_t$ in Scenario \ref{scenario1} is \[C_i(\mathfrak{E}_t, \alp, \ad_{\mathfrak{E}_t}) = \frac{\log_{|\mA|}(|\mA|^i - b)}{i}.\] 
\end{proposition}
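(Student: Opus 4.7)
The plan is to establish matching upper and lower bounds, showing that the largest unambiguous code for $\Omega^i_{\mathfrak{E}_t}$ has cardinality exactly $|\alp|^i - b$. The central structural observation is a \emph{stacking equivalence}: under Scenario~\ref{scenario1}, the adversary is constrained to attack the same $t$ edges in every round, so collecting the $i$ symbols sent on each edge across rounds into a single vector of $\alp^i$ turns $i$ uses of $\mathfrak{E}_t$ over $\alp$ into a single use of $\mathfrak{E}_t$ over the extended alphabet $\alp^i$, with the adversary corrupting up to $t$ of the (now $\alp^i$-valued) edges.

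For the lower bound, I would construct an explicit unambiguous code $C^* \subseteq \alp^{(2t+1)i}$ of size $|\alp|^i - b$ by extending the one-shot capacity-achieving pair $(C,\mF)$ of~\cite[Theorem 6.15]{beemer2023network}. The reserved set at the source is $B^i$, consisting of $b$ stacked repetitions of the one-shot reserved vectors in $B$; the expected inputs at $V_1$ and $V_2$ under a reserved transmission are $B^i_1$ and $B^i_2$, respectively. In each round $j$ the intermediate nodes apply the same one-shot detection mechanism as in $\mF$, using per-round network codes $\mF^1, \ldots, \mF^i$. Because the adversary attacks the same $t$ edges in every round, a single round in which $V_1$ or $V_2$ raises the one-shot detection flag suffices for the terminal $T$ to identify the corrupted edges and decode correctly for the remaining rounds. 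This reserves only $b$ total patterns rather than $b^i$, yielding $|\alp|^i - b$ valid source messages.

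For the upper bound, I would use the stacking correspondence in the reverse direction: any unambiguous code $C$ for $\Omega^i_{\mathfrak{E}_t}$ induces, by concatenating the symbols sent on each edge across rounds into $\alp^i$-values, an unambiguous code for the one-shot channel of $\mathfrak{E}_t$ over the extended alphabet $\alp^i$ of the same cardinality. By~\cite[Theorem 6.15]{beemer2023network} applied to this larger alphabet, that code has size at most $|\alp^i| - b = |\alp|^i - b$. Taking $\log_{|\alp|}$ and dividing by $i$ gives the claimed upper bound on $C_i$.

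The main obstacle I expect is verifying that the reserved-vector count $b$ appearing in the one-shot capacity formula of~\cite[Theorem 6.15]{beemer2023network} is insensitive to the alphabet, so that the same $b$ applies to both $\alp$ and $\alp^i$; this should follow from the fact that the capacity-achieving construction there depends only on the combinatorial $(t, 2t+1)$-corruption pattern of $\mathfrak{E}_t$, not on the alphabet cardinality. A secondary subtlety is that the $i$-shot network codes $\mF^1, \ldots, \mF^i$ are memoryless per round, a strict restriction relative to an arbitrary one-shot code over $\alp^i$; the upper bound is unaffected since memoryless codes are a subclass, and for the lower bound I would verify that the detection-based strategy from~\cite[Theorem 6.15]{beemer2023network} naturally respects the per-round memoryless structure when unstacked.
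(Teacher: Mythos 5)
Your upper bound has a genuine gap. You reduce $i$ uses over $\alp$ to one use over $\alp^i$ (a valid correspondence under Scenario~\ref{scenario1}) and then invoke~\cite[Theorem 6.15]{beemer2023network} over the alphabet $\alp^i$ to conclude size at most $|\alp^i| - b$. But that theorem only establishes $C_1(\mathfrak{E}_t, \alp', \ad) < 1$, i.e., size at most $|\alp'| - 1$, for any alphabet $\alp'$; it does not pin down a reserved count, and there is no a priori reason the reserved count over $\alp^i$ is at least the $b$ defined over $\alp$. You flag this alphabet-insensitivity as the ``main obstacle'' but assert rather than prove it. Closing it essentially requires re-running the argument of~\cite[Theorem 6.15]{beemer2023network} over $\alp^i$, which is what the paper does: working intrinsically in the $i$-shot setting, it assumes $|C| = |\alp|^i - b + 1$, observes that $\pi_1^i$ is injective on $C$ so that $\mF_1(\pi_1^i(C)) = (\alp^i \setminus B^i) \cup \{l\}$ for some $l \in B^i$, and constructs explicit $x' \in B_t^{\h}(x)$, $y' \in B_t^{\h}(y)$ with colliding images under $(\mF_1\circ\pi_1^i,\mF_2\circ\pi_2^i)$, contradicting unambiguity. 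Your stacking picture explains \emph{why} the answer should take the form $\frac{\log_{|\alp|}(|\alp|^i - b)}{i}$, but it cannot substitute for that direct contradiction argument.

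On the lower bound, the reserved-set idea ($B^i$, yielding $|\alp|^i - b$ messages) matches the paper, but the per-round memoryless detection you sketch does not work as described: since $a \notin B^i$ does not exclude $a_j \in B$ for individual rounds $j$, per-round detection would false-flag on legitimate transmissions, and ``a single flagged round suffices for $T$ to identify the corrupted edges'' is unjustified --- a flag signals inconsistency, not its location. The paper's Cases 1 and 2 instead have $V_1$ and $V_2$ act on the \emph{aggregated} $\alp^i$-valued inputs per edge (flagging when at least $t/2$, resp.\ $t/2+1$, of their aggregate edge values disagree), i.e., they run the one-shot detection scheme over $\alp^i$ rather than a fresh one-shot scheme over $\alp$ in each round.
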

\begin{proof}
We start with the lower bound. Our goal is to show that the code 
    \begin{equation*}
        C = \{(a\mid \dots \mid a) \in (\mA^i)^{2t+1} \,:\, (a_1, \ldots,a_{(2t+1)i}) \notin B^i\} \subseteq (\alp^i)^{2t+1}
    \end{equation*}
    is unambiguous for $\Omega^i[\mathfrak{E}_t, \mA, \mF, \textup{out}(S)\to \textup{in}(T), \mathcal{U}_S,t]$, for a certain network code $\mF$. Suppose that the adversary changes $t$ symbols in the network. We consider two cases: \\ 
    
 \textbf{Case 1:} If $V_1$ receives a vector of the form $(b_1 \mid \dots \mid b_j \mid a \mid \dots \mid a)$ with~$j \geq t/2$, then the vertex $V_1$ forwards a reserved symbol $k \in B_1^i$ and the vertex $V_2$ forwards the received symbol $a \in \alp^{(t+1)i}\setminus B_2^{i}$. Therefore, terminal receives an element of 
    \begin{equation*}
        \Omega^i[\mathfrak{E}_t, \mA, \mF, S \longrightarrow T, \mathcal{U}_S,t](a \mid \dots \mid a) = \{(k\mid a)\, : \, a \in \mA^{(t+1)i} \setminus B_2^i, \, k \in B_1^i\}
    \end{equation*} and trusts the vector from $V_2$.\\

     \textbf{Case 2:} If $V_2$ receives a vector of the form $(b_1\mid \dots \mid b_l\mid a \mid \dots \mid a)$ with~$l \geq t/2 + 1$, then the vertex $V_1$ forwards the received symbol $a \in \alp^{ti}\setminus B_1^{i}$ and the vertex $V_2$ forwards a reserved symbol $k \in B_2^{i}$. Therefore, terminal receives an element of  
    \begin{equation*}
         \Omega^i[\mathfrak{E}_t, \mA, \mF, S \longrightarrow T, \mathcal{U}_S,t](a \mid \dots \mid a) = \{(a\mid k) \, : \,
         a \in \mA^{ti} \setminus B_1^i, \, k \in B_2^i\}
    \end{equation*} and trusts the vector from $V_1$. It follows that for any $c,c' \in C$, we have that
    \begin{equation*}
        \Omega^i[\mathfrak{E}_t, \mA, \mF, S \longrightarrow T, \mathcal{U}_S,t](c)\cap \Omega^i[\mathfrak{E}_t, \mA, \mF, S \longrightarrow T, \mathcal{U}_S,t](c') \neq \emptyset 
    \end{equation*} 
    if and only if $c=c'$. This shows that $C$ is an unambiguous code. One can easily check that $|C| \geq |\mA|^i - b.$\\

 It remains to show the upper bound. The following generalizes~\cite[Theorem 6.15]{beemer2023network} to the multishot setting. Assume towards a contradiction that there exists an unambiguous code $C$ for $\Omega^i[\mathfrak{E}_t, \mA, \mF, S \longrightarrow T, \mathcal{U}_S,t]$ such that $|C| = |\mA|^i - b + 1$ for some choice of network code $\mF = {\mF_1,\mF_2}$. Let $x = (x_1, \ldots, x_{(2t+1)i})$ where $$x_1 = (x_1, \ldots, x_{2t+1}), \ldots, x_i = (x_{(2t+1)(i-1) + 1}, \ldots, x_{(2t+1)i}).$$ Since $C$ is an unambiguous code, it must be the case that at least one of the following hold:~$d_H(x_1,y_1)\geq 2t+1, \ldots, d_H(x_i,y_i)\geq 2t+1$. Then, $\pi_1^i(C)$ contains~$|\mA|^i-b + 1$ distinct elements. Since the restriction of $\mF_1$ to $\pi_1^i(C)$ is injective, it follows that $$\mF_1(\pi_1^i(C)) = (\mA^i \setminus B^i)\cup \{l\},$$ for some $l \in B^i$. To conclude, there must exist $x,y \in C$, with $x \neq y$ and some $e \in \mA^{(2t+1)i}$ such that \[\mF_1(\pi_1^i(x)) = \mF_1(\pi_1^i(e))\hspace{3ex} \hbox{and} \hspace{3ex} d_H(\pi_1^i(e),\pi_1^i(y)) \leq ti-1.\] We then have that 
\begin{equation*}
    x' := (x_1, \ldots, x_{ti}, x_{ti+1}, y_{ti+2}, \ldots, y_{(2t+1)i}) \in B_t^H(x), 
\end{equation*}
\begin{equation*}
    y' := (e_1, \ldots, e_{ti},x_{ti+1}, y_{ti+2}, \ldots, y_{(2t+1)i}) \in B_t^H(y).
\end{equation*}
Lastly, we can observe that \[(\mF_1(\pi_1^i(x'),\mF_2(\pi_2^i(x')) = (\mF_1(\pi_1^i(y'),\mF_2(\pi_2^i(y')) \in \Omega^i(x)\cap \Omega^i(y),\] which contradicts $C$ being an unambiguous code. Therefore,  \begin{equation*}
    C_i(\mathfrak{E}_t, \alp, \ad_{\mathfrak{E}_t}) \leq \frac{\log_{|\mA|}(|\mA|^i - b)}{i}. \qedhere
\end{equation*}
\end{proof}

The above theorem shows a gain in using $\mathfrak{E}_t$ multiple times for communication without having to compute the one-shot capacity. This result opens the discussion of multishot capacities of adversarial networks without an explicit strategy for the one-shot capacity. 
Note that if $t=1$, we recover the Diamond network $\mathcal{D}$ where the multishot capacity was computed as  $\smash{C_i(\mD, \alp, \ad_{\mD}) = \frac{\log_{|\mA|}(|\mA|^i-1)}{i}}$. Also, $b=1$ for $\mD$ and in Scenario~\ref{scenario1}, there is a gain in using $\mD$ multiple times for communication.

\paragraph{Scenario \ref{scenario2} for $\mathfrak{E}_t$.} Recall that in Scenario~\ref{scenario2}, the adversary $\ad_{\mathfrak{E}_t}$ can corrupt up to~$t$ edges of the edges outgoing the source of $\mathfrak{E}_t$ and is free to change the edges attacked.

Let $B$ be the set of reserved vectors as in Scenario~\ref{scenario1}. Assume that there exists a capacity achieving pair $(C,\mF)$ such that $C_1(\mathfrak{E}_t, \alp, \ad_{\mathfrak{E}_t}) =\log_{|\mA|}(|\mA|-b).$ We can immediately say that $C_i(\mathfrak{E}_t, \alp, \ad_{\mathfrak{E}_t}) \geq \log_{|\mA|}(|\mA|-b)$ by~\cite[Proposition~12]{ravagnani2018}. We wish to show that \[C_i(\mathfrak{E}_t, \alp,  \ad_{\mathfrak{E}_t}) = \log_{|\alp|}(|\alp| - b),\] rendering that $C_1(\mathfrak{E}_t, \alp, \ad_{\mathfrak{E}_t}) = C_i(\mathfrak{E}_t, \alp, \ad_{\mathfrak{E}_t}).$

Suppose that an adversary $\ad_{\mathfrak{E}_t}$ can corrupt up to $t$ edges of $\mathfrak{E}_t$. The action of the adversary can be represented as the channel \[\textup{H}_{\mathfrak{E}_t}: \alp^{2t+1} \dashrightarrow \alp^{2t+1}\quad \textup{defined by}\quad \textup{H}_{\mathfrak{E}_t}(x) = \{y \in \alp^{2t+1} \,:\, d_{\h}(x,y) \leq t\}.\] 

for all $x\in\alp^{2t+1}$. From the construction of the unambiguous code for $C_1(\mathfrak{E}_t, \alp, \ad_{\mathfrak{E}_t})$, the largest unambiguous code for $\textup{H}_{\mathfrak{E}_t}$ has cardinality $|\alp| - b$ and there is no larger code (since $b$ vectors are reserved). Thus, $\smash{C_1(\textup{H}_{\mathfrak{E}_t}) = \log_{|\alp|}(|\alp| - b)}$. We wish to show that there does not exist an unambiguous code $C$ for $(\mathfrak{E}_t, \ad_{\mathfrak{E}_t})$ where $|C| = (|\alp|-b)^i + 1$, that is,~$|C|~\leq~(|\alp|-b)^i$. We provide a structural property that shows $C$ is a unambiguous code for~$\h_{\mathfrak{E}_t}$. We also have that $C$ is unambiguous for $\h_{\mathfrak{E}_t}^i$ if and only if at least one of the following holds: $d_{\h}(x^1,y^1) \geq 2t+1, \ldots, d_{\h}(x^i,y^i) \geq 2t+1.$ We have the following remark.

\begin{remark}\label{ufamE} We provide a strategy for computing the largest unambiguous code for~$\mathfrak{E}_t$ in multiple transmission rounds. Let $C$ be an unambiguous code for $\h_{\mathfrak{E}_t}^i$ where we have that $|C| = (|\alp|-b)^i + 1$. Define $x^1 = (x_1, \ldots, x_{2t+1}), \ldots, x^i = (x_1^i, \ldots, x^i_{2t+1})$. With the approach that was applied in~\cite[Example~9]{ravagnani2018}, we claim that there are no two codewords that coincide in the first~$(|\mA| - b)^i$ components. Let $\smash{c_1, \ldots, c_{(|\alp|-b)^i + 1} \in C}$. Without loss of generality, we can assume that for $\smash{c_1, c_2 \in C}$, $c_1^1 = c_2^1$ and $c_1 = 0$. This implies that $\smash{c_3^1, \ldots, c_{(|\alp|-b)^i + 1}^1}$ must have Hamming weight~$2t+1$. If not, assume that $c_3^1$ has Hamming weight less than~$2t+1$. This allows one of $\{c_1^2, c_2^2, c^2_3\}, \ldots ,\{c_1^i, c_2^i, c_3^i\}$ to be an unambiguous code of Hamming distance at least $2t+1$ and of cardinality $3$, contradicting the original capacity for $\h_{\mathfrak{E}_t}$. Since $c_3^1, \ldots, c_{(|\alp|-b)^i + 1}^1$ have Hamming weight $2t+1$, a comparison of any two elements gives Hamming distance at most $t+1$. We assume that $C$ is unambiguous for $\h_{\mathfrak{C}}^i$, therefore, one of~$\{c_1^2,c_2^2,c_3^2\}, \ldots, \{c_1^i,c_2^i,c_3^i\}$ has to be an unambiguous code for $\h_{\mathfrak{C}}$ with cardinality $3$ and minimum Hamming distance $2t+1$ contradicting $C_1(\h_{\mathfrak{C}}) = \log_{|\alp|}(|\alp| - b)$. 
\end{remark}

The next proposition computes the multishot capacity~$C_i(\mathfrak{E}_t, \alp, \ad_{\mathfrak{E}_t})$ in Scenario~\ref{scenario2}. 

\begin{proposition}\label{boundE2}  Let $\alp$ be an alphabet, $\mU_S$ be the set of sources connected to the source~$S$, and  $\ad_{\mathfrak{E}_t}$ be an adversary able to corrupt up to $t$ edges of $\mU_S$ of $\mathfrak{E}_t$. Then, the $i$-shot capacity of $\mathfrak{E}_t$ in Scenario \ref{scenario2} is \[C_i(\mathfrak{E}_t, \alp, \ad_{\mathfrak{E}_t}) = \log_{|\alp|}(|\alp|-b).\]
\end{proposition}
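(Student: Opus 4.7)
The plan is to prove matching upper and lower bounds on $C_i(\mathfrak{E}_t, \alp, \ad_{\mathfrak{E}_t})$ via a direct reduction of the multishot problem to the one-shot capacity of $\h_{\mathfrak{E}_t}$, mirroring the template established for Family $\mathfrak{C}_t$ in Proposition~\ref{ufc}, but using $|\mA|-b$ in place of $|\mA|$ throughout.

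For the lower bound, I will apply \cite[Proposition~12]{ravagnani2018} to the channel $\Omega_{\mathfrak{E}_t}^i$: since the one-shot capacity-achieving pair $(C,\mF)$ yields an unambiguous code of size $|\alp|-b$ and Scenario~\ref{scenario2} allows the adversary to act independently each round, the $i$-fold product code $C^i$ with the coordinatewise network code $\mF$ applied in each round is unambiguous for $\Omega_{\mathfrak{E}_t}^i$ and has size $(|\alp|-b)^i$. This immediately gives $C_i(\mathfrak{E}_t,\alp,\ad_{\mathfrak{E}_t}) \geq \log_{|\alp|}(|\alp|-b)$.

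For the upper bound, I assume towards a contradiction that there exists an unambiguous code $C$ for $\Omega_{\mathfrak{E}_t}^i$ of size $(|\alp|-b)^i+1$, under some network code $\mF = (\mF_1,\mF_2)$. Following the structure of the proof of Proposition~\ref{ufc}, I will write each codeword as $x = (x^1 \mid \cdots \mid x^i)$ with $x^j \in \alp^{2t+1}$, and form the associated code $C' \subseteq \alp^{(2t+1)i}$ obtained by passing $C$ through the processing stage so that the remaining uncertainty is exactly described by the adversary channel $\h_{\mathfrak{E}_t}^i$. Since Scenario~\ref{scenario2} decouples the adversary's actions across rounds, the unambiguity of $C$ for $\Omega_{\mathfrak{E}_t}^i$ forces $C'$ to be unambiguous for $\h_{\mathfrak{E}_t}^i$ of the same size. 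I will then invoke Remark~\ref{ufamE}, which establishes that no unambiguous code for $\h_{\mathfrak{E}_t}^i$ can exceed $(|\alp|-b)^i$ in size, yielding the desired contradiction and hence $|C| \leq (|\alp|-b)^i$.

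The main obstacle is the reduction step that produces $C'$ from $C$ while preserving unambiguity. Unlike Scenario~\ref{scenario1}, where correlated attacks across rounds were exploited to gain capacity, here one must carefully argue that independence of attacks in Scenario~\ref{scenario2} genuinely forces the $i$-shot problem to decouple: for any pair of codewords whose images under $\mF$ collide in some round, the adversary can synthesize a common output in $\Omega_{\mathfrak{E}_t}^i$ by independently attacking that round only. Once this decoupling is in hand, Remark~\ref{ufamE} closes the argument, and the bound $C_i(\mathfrak{E}_t,\alp,\ad_{\mathfrak{E}_t}) \leq \log_{|\alp|}(|\alp|-b)$ combines with the lower bound to give equality.
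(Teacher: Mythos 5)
Your proposal matches the paper's proof essentially step for step: the lower bound is obtained by applying \cite[Proposition~12]{ravagnani2018} to the power channel $\Omega_{\mathfrak{E}_t}^i$, and the upper bound proceeds by contradiction, passing a hypothetical code of size $(|\alp|-b)^i+1$ through the network code to obtain a code $C'$ unambiguous for $\h_{\mathfrak{E}_t}^i$ and then invoking Remark~\ref{ufamE}. The only difference is cosmetic (order of the two bounds), so no further comparison is needed.
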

\begin{proof}
Assume that we use a $2t+1$-times repetition code~$C=\{(a,\ldots, a): a \in \alp\}$ in each round. We first show that $|C| \leq (|\alp|-b)^i$. Let $\ad_{\mathfrak{E}_t}$ be an adversary that is restricted to corrupting $t$ edges of the network $\mathfrak{E}_t$ on the first level. The action of the adversary is given by $\h_{\mathfrak{E}_t}$  as described above. Let $\mathcal{F}$ be a network code for $\mathfrak{E}_t$. We will use the same network code each transmission round. We have that $i$ uses of $\mathfrak{E}_t$ is represented by $\Omega^i_{\mathfrak{E}_t}$. For $x = (x_1, \ldots, x_{(2t+1)i}) \in \mA^{2t+1} \times \dotsm \times \mA^{2t+1},$ we have that \[\Omega^i_{\mathfrak{E}_t}(x) := \h_{\mathfrak{E}_t}^i(\mathcal{F}^1_V(x_1,\ldots,x_{2t+1}), \ldots, \mathcal{F}^i_V(x_{(2t+1)(i-1)+1}, \ldots, x_{(2t+1)i}).\] Now assume that there exists a unambiguous code $C$ with $|C| = (|\alp|-b)^i + 1$. Then there exists a unambiguous code \[C' := \{\mF^1(x_1,\ldots,x_{2t+1}), \ldots, \mF^i(x_{(2t+1)(i-1)+1},\ldots,x_{(2t+1)i}))\} \subseteq  \alp^{2t+1} \times \dotsm \times \alp^{2t+1}\] of cardinality $(|\alp| - b)^i + 1$ which is unambiguous for $\h_{\mathfrak{E}_t}^i$. Therefore, there must exist~$y,y' \in C$, with~$y \neq y'$, such that they coincide in the first $(|\mA|-b)^i$ components. We have that~$C'$ is a unambiguous code for $\h_{\mathfrak{E}}^i$ which has two different codewords that coincide in the first $(|\alp|-b)^i$ components. However, it was shown in Remark~\ref{ufamE}, such a code does not exist. Therefore,~$|C| \leq (|\alp|-b)^i$.

The lower bound immediately follows from~\cite[Proposition 12]{ravagnani2018} using the $i-$th power channel $\Omega_{\mathfrak{E}_t}^i.$
\end{proof}

The above proposition shows that there is no gain in using $\mathfrak{E}_t$ multiple times for communication in this scenario.

\begin{remark} The upper bound comes directly from the proof of the upper bounds provided for $\mathcal{D},\mathcal{S},\mathfrak{C}_t$ and $\mathfrak{D}_t$ by assuming there exists unambiguous code $C$ where we have that $|C| = (|\alp|-b)^i + 1$ for $\h^i_{\mathfrak{E}_t}$ and showing that a code of this size contradicts the fact that there are no two codewords that coincide in the first $(|\alp|-b)^i$ components. Hence,  $C_i(\mathfrak{E}_t, \alp, \ad_{\mathfrak{E}_t}) = \log_{|\alp|}(|\alp| - b)$ and $C_1(\mathfrak{E}_t, \alp, \ad_{\mathfrak{E}_t}) = C_i(\mathfrak{E}_t, \alp, \ad_{\mathfrak{E}_t})$. Thus, there is no gain of using $\mathfrak{E}_t$ multiple times for communication in this scenario.  The results for $\mathfrak{E}_t$ follow similarly to the results of $\mD$, which is to be expected since $\mathfrak{E}_1 = \mD.$
\end{remark}

\section{Multishot Double Cut-Set Bound}\label{doubleM}
In~\cite{beemer2023network}, a reduction from $3$-level networks to $2$-level networks was provided, bounding the capacity of a $3$-level network by the $2$-level network associated with it. In this section, we extend this reduction to the multishot regime, bounding the multishot capacity of simple $3$-level networks to the simple $2$-level networks associated to them. We also provide a multishot version of the Double Cut-Set Bound~\cite[Theorem 8.6]{beemer2023network} and a multishot version of the Singleton Cut-Set Bound~\cite[Corollary 66]{ravagnani2018}. Using the procedure described in~\cite[Section~V]{beemer2023network}, we will be able to upper bound the multishot capacity of an arbitrary network~$\mN$ with that of an induced simple $2$-level network.

We start by describing the transfer of information from $\mE_1$ to $\mE_2$, where $\mE_1, \mE_2$ are edge cuts. We say edge cut $\mE_1$ precedes edge cut $\mE_2$, denoted $\mE_1 \preccurlyeq \mE_2$, if there exists a directed path in $\mN$ from~$\mE_1$ to~$\mE_2$. Let $\mN$ be a network, $\mE_1$ and $\mE_2$ be edge cuts such that~$\mE_1 \preccurlyeq \mE_2$. Let $\alp$ be a network alphabet and $\mF$ a network code for $(\mN,\alp)$, $\mU \subseteq \mE$ be a set of edges, and $t \ge 0$. Then we denote by 
\begin{equation} \label{chtd}
\Omega[\mN,\alp,\mF,\mE_1 \longrightarrow \mE_2,\mU \cap \mE_1,t]: \alp^{|\mE_1|} \dashrightarrow \alp^{|\mE_2|}
\end{equation}
the channel that describes the transfer from the edges of $\mE_1$ to those of $\mE_2$, when an adversary can corrupt up to $t$ edges from $\mU \cap \mE_1$. This notation allows us to focus on the transfer of information between the two edge cuts $\mE_1$ and $\mE_2$, instead of looking at the transfer of information from the source~$S$ to the set of terminals $\bold{T}.$ We begin with the following proposition that extends~\cite[Theorem~5.8]{beemer2023network} to the multishot setting.

\begin{proposition}Let $\mathcal{N}_3$ be a simple $3$-level network, and let $\mathcal{N}_2$ be the simple $2$-level network associated to it. Define $\mathcal{U}_3$ and $\mathcal{U}_2$ to be the set of edges incident with the sources of $\mathcal{N}_3$ and $\mathcal{N}_2$ and define $\ad_{\mN_3}, \ad_{\mN_2}$ as adversaries able to corrupt up to $t$ edges of~$\mU_3$ and~$\mU_2$  respectively. Then for all alphabets $\alp$, \begin{equation}
    C_i(\mathcal{N}_3, \alp, \ad_{\mN_3}) \leq C_i(\mathcal{N}_2, \alp, \ad_{\mN_2}).
\end{equation}
\end{proposition}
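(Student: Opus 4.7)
The plan is to reuse the outer code verbatim and show it remains unambiguous on the smaller network. Since the associated 2-level network $\mN_2$ shares its source with $\mN_3$ by the construction recalled from~\cite[Section~V]{beemer2023network}, the outer alphabet $\alp^{\deg^{+}(S)}$ is identical in both networks, and $\mU_2$ coincides with $\mU_3$ under the natural identification of the first level. Fix an $i$-shot unambiguous code $C$ for $\Omega^i$ on $\mN_3$ that attains $C_i(\mN_3, \alp, \ad_{\mN_3})$, together with a capacity-achieving network code $\mF_3$. Restricting $\mF_3$ to the first-level intermediate nodes (which are exactly the intermediate nodes of $\mN_2$) produces a network code $\mF_2$ for $\mN_2$. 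The target is to show that $C$ is unambiguous for $\Omega^i$ on $(\mN_2, \mF_2)$, from which $|C| \le |\alp|^{\, i \cdot C_i(\mN_2, \alp, \ad_{\mN_2})}$ follows and, dividing the logarithm by $i$, so does the claimed inequality.

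The key structural observation is that $\mU_3$ lies entirely on the first level of $\mN_3$, so the portion of $\mN_3$ downstream of the level-2 intermediate nodes is a deterministic map $g$ determined by the level-2 functions of $\mF_3$. Letting $\mE_1$ be the cut formed by the outgoing edges of the level-2 nodes in $\mN_3$, the channel notation from~\eqref{chtd} gives the factorization
\[
\Omega[\mN_3, \alp, \mF_3, S \longrightarrow T, \mU_3, t] \;=\; g \circ \Omega[\mN_3, \alp, \mF_3, S \longrightarrow \mE_1, \mU_3, t],
\]
and by construction of $\mN_2$ the right-hand channel is precisely the source-to-terminal channel of $(\mN_2, \mF_2)$. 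Taking the $i$-fold product yields $\Omega^i_{\mN_3} = g^i \circ \Omega^i_{\mN_2}$, which is valid in both scenarios~\ref{scenario1} and~\ref{scenario2} because in each round the adversary's action stays within $\mU_3$ on the first level, leaving the downstream deterministic processing identical across rounds.

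With the factorization in hand, I argue by contradiction: if $C$ were not unambiguous for $\Omega^i$ on $\mN_2$, there would exist distinct $x, y \in C$ and some $z \in \Omega^i_{\mN_2}(x) \cap \Omega^i_{\mN_2}(y)$ at a terminal of $\mN_2$; applying $g^i$ to $z$ would produce a common element of $\Omega^i_{\mN_3}(x) \cap \Omega^i_{\mN_3}(y)$ at the corresponding terminal of $\mN_3$, contradicting the unambiguity of $C$ there. The step I expect to require the most care is making the channel factorization through the intermediate cut fully rigorous in the multishot regime: one must verify that the level-2 processing functions of $\mF_3$ depend only on the (possibly corrupted) inputs crossing $\mE_1$, so that the composition across $i$ transmission rounds is literally $g^i$, and that this is unaffected by whether the adversary's choice of attacked edges is fixed or varies between rounds. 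Once these bookkeeping points are handled, the reduction of multishot $\mN_3$ to multishot $\mN_2$ follows exactly as in the one-shot argument of~\cite[Theorem~5.8]{beemer2023network}.
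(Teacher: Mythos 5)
Your high-level plan---fix a capacity-achieving pair $(C,\mF_3)$ on $\mN_3$, build a corresponding $\mF_2$ on $\mN_2$, and transfer unambiguity via a deterministic post-processing map---is close in spirit to the paper's argument, but it rests on a misreading of how the associated $2$-level network $\mN_2$ is constructed in~\cite[Theorem~5.8]{beemer2023network}. As the paper's proof recalls, the intermediate nodes of $\mN_2$ are the \emph{connected components of a bipartite graph $G^{3,2}$ on the level-$1$ and level-$2$ nodes of $\mN_3$}, and the function $\mF_{V_i}$ attached to such a node is a \emph{composition of functions at nodes in both $V_1$ and $V_2$ of $\mN_3$}. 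The intermediate nodes of $\mN_2$ are therefore not ``exactly the first-level intermediate nodes'' of $\mN_3$, and $\mF_2$ is not obtained by restricting $\mF_3$ to level one. Consequently, the cut $\mE_1$ you factor through identifies the source-to-$\mE_1$ channel with the channel of a \emph{level-$1$ truncation} of $\mN_3$, not with the channel of the paper's $\mN_2$. What your factorization actually proves is $C_i(\mN_3) \le C_i(\mN_2')$ for that truncation $\mN_2'$, whose in/out-degrees generally differ from those of the $\mN_2$ in the statement, so the claimed inequality does not follow.

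There is also an internal inconsistency: you define $\mE_1$ as the \emph{outgoing} edges of the level-$2$ nodes, which in a simple $3$-level network is just $\textup{in}(T)$, forcing $g$ to be the identity and the factorization to be vacuous; for $g$ to realize the level-$2$ processing, $\mE_1$ must be the cut between levels $1$ and $2$. The fix for both issues is to use the fact the paper actually invokes, which is stronger and simpler: for the genuine $\mN_2$ and the composed $\mF_2$, the fan-out sets of $\Omega[\mN_3,\alp,\mF_3,S\to T,\mU_3,t]$ and $\Omega[\mN_2,\alp,\mF_2,S\to T,\mU_2,t]$ are \emph{exactly equal} on every input, so the equality carries over coordinate-wise to $\Omega^i$ and unambiguity transfers immediately, with no intermediate deterministic map $g$ required.
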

\begin{proof} Let $\mF_3$ be a network code and $C_3$ be an outer code for an alphabet $\alp$ and the simple $3$-level network $\mN_3$ which is unambiguous for $\Omega[\mN_3,\alp,\mF_3,S \longrightarrow T, \mU_3, t]$, for all~$T \in \bold{T}$. We note that the same network code is used each transmission round. Let~$C_3'$ be an unambiguous code for the channel $\Omega^i[\mN_3,\alp,\mF_3,S \longrightarrow T, \mU_3, t]$, where $\Omega^i$ is the channel associated to~$i$ uses of~$\mN_3$. We wish to show that there exists~$\mF_2$ such that~$C_3'$ is unambiguous for $\Omega^i[\mN_2,\alp,\mF_2,S \longrightarrow T, \mU_2, t],$ for all $T \in \bold{T}.$ Let $\smash{G^{3,2}, V_{ij}^3, \mF_{V_{ijk}^3}}$, the neighborhood of~$V_{ij}^3$, $\mF_2$ and $\mF_{V_i}$ be defined as in~\cite[Theorem 5.8]{beemer2023network}. We have that each intermediate node $V_i$ in $\mN_2$ corresponds to connected component $i$ of $G^{3,2}$ and is a composition of functions at nodes in $V_1$ and $V_2$ of $\mN_3$. It was shown that the fan out set~$\Omega(x)$ of any $x \in C_3$ over the channel $\Omega[\mN_3,\alp,\mF_3,S \longrightarrow T, \mU_3, t]$ is exactly equal to the fan-out set~$\Omega(x)$ of~$x$ over $\Omega[\mN_2,\alp,\mF_2,S \longrightarrow T, \mU_2, t]$. Therefore,  it is easy to check that this is the case for~$\Omega^i[\mN_3,\alp,\mF_3,S \longrightarrow T, \mU_3, t]$ and $\Omega^i[\mN_2,\alp,\mF_2,S \longrightarrow T, \mU_2, t]$, by definition of~$\Omega^i$. We can conclude that $C_3'$ is an unambiguous code for the channel~$\Omega^i[\mN_2,\alp,\mF_2,S \longrightarrow T, \mU_2, t]$. Therefore, we have that \[C_i(\mathcal{N}_3, \alp, \ad_{\mN_3}) \leq C_i(\mathcal{N}_2, \alp,  \ad_{\mN_2}),\] as desired. 
\end{proof}  

The earlier proposition indicates that by understanding the relationship between a simple $2$-level network and a simple 3-level network, we can determine the multishot capacity of a larger network by calculating the multishot capacity of a smaller one. Before proving the main results, we have the following remark.

\begin{remark} \label{rmk:imm}
Note that as in~\cite{beemer2023network}, we do not require the edge cuts $\mE_1$ and $\mE_2$ to be minimal or even  \textit{antichain} cuts (i.e., cuts where any two different edges cannot be compared with respect to the order $\preccurlyeq$). Furthermore, $\Omega^i[\mN,\mA,\mF,\mE_1 \longrightarrow \mE_2,\mU \cap \mE_1,t]$ inherits from the channel~$\Omega[\mN,\mA,\mF,\mE_1 \longrightarrow \mE_2,\mU \cap \mE_1,t]$ the property that the predecessors are considered first in the network topology. Therefore,~$\Omega^i[\mN,\mA,\mF,\mE_1 \longrightarrow \mE_2,\mU \cap \mE_1,t]$
provides the value of each edge of the edge cut $\mE_2$ as a \textit{function} of the values of the immediate predecessors in $\mE_1$. 
\end{remark}

We now illustrate and extend the results on $2$-level and $3$-level networks derived in~\cite[Section~8]{beemer2023network} and discuss how they can be combined and applied to study the multishot capacity of an arbitrarily large and complex network $\mN$. The following result generalizes the Double Cut-Set Bound~\cite[Theorem 8.6]{beemer2023network} to the multishot scenario.

\begin{theorem}[The Multishot Double Cut-Set Bound]\label{dcsb} Let $\mN$ be a network, $\alp$ be an alphabet and $\mathcal{U} \subseteq \mathcal{E}$ be a set of edges. Let $\mathcal{E}_1$ and $\mathcal{E}_2$ be edge cuts such that $\mathcal{E}_1 \preccurlyeq \mathcal{E}_2$, let~$T \in \bold{T}$ and let $\mathcal{F}$ be a network code. Then, the one-shot capacity of $\mN$ satisfies \[C_1(\Omega^i[\mathcal{N},\alp, \mathcal{F}, S \longrightarrow T, \mathcal{U},t]) \leq \max_{\mathcal{F}} C_1(\Omega^i[\mathcal{N},\alp, \mathcal{F},\mathcal{E}_1 \longrightarrow \mathcal{E}_2, \mathcal{U},t])\] with the maximum taken over all network codes $\mathcal{F}$ for $(\mathcal{N},\alp)$. 
\end{theorem}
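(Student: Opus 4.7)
The plan is to adapt the proof of the one-shot Double Cut-Set Bound~\cite[Theorem~8.6]{beemer2023network} to the $i$-fold product channel. I would fix a network code $\mathcal{F}$ and a largest unambiguous code $C\subseteq \alp^{i\deg^+(S)}$ for $\Omega^i[\mathcal{N},\alp,\mathcal{F},S\longrightarrow T,\mathcal{U},t]$, and then construct an unambiguous code $C' \subseteq \alp^{i|\mathcal{E}_1|}$ for $\Omega^i[\mathcal{N},\alp,\mathcal{F},\mathcal{E}_1\longrightarrow \mathcal{E}_2,\mathcal{U},t]$ of the same cardinality, using the same network code $\mathcal{F}$. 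Taking the maximum over $\mathcal{F}$ on the right-hand side will then yield the claimed inequality.

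The map $\phi\colon C \to \alp^{i|\mathcal{E}_1|}$ is the natural one: for $x=(x^{(1)},\dots,x^{(i)})$, let $\phi(x)^{(r)}$ be the deterministic value produced on $\mathcal{E}_1$ in round $r$ when $x^{(r)}$ is injected at $S$ and no adversary action is performed. This is well defined by Remark~\ref{rmk:imm}, since the state on $\mathcal{E}_1$ depends only on the predecessors of $\mathcal{E}_1$ in the partial order $\preccurlyeq$ and on $\mathcal{F}$. Injectivity of $\phi|_C$ follows because, if $\phi(x)=\phi(y)$ for distinct $x,y\in C$, then zero-attack propagation from $\mathcal{E}_1$ through the remainder of the network in each round would yield a common vector in $\Omega^i[S\longrightarrow T](x)\cap \Omega^i[S\longrightarrow T](y)$, contradicting the unambiguity of $C$. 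Setting $C':=\phi(C)$, I would show $C'$ is unambiguous for $\Omega^i[\mathcal{E}_1\longrightarrow \mathcal{E}_2]$: assume for contradiction that distinct $x,y\in C$ and $z\in\alp^{i|\mathcal{E}_2|}$ satisfy $z\in \Omega^i[\mathcal{E}_1\longrightarrow \mathcal{E}_2](\phi(x))\cap \Omega^i[\mathcal{E}_1\longrightarrow \mathcal{E}_2](\phi(y))$. Each round $r$ then furnishes an adversary attack on at most $t$ edges of $\mathcal{U}\cap\mathcal{E}_1$ carrying $\phi(x)^{(r)}$ to $z^{(r)}$, and similarly a (possibly different) attack carrying $\phi(y)^{(r)}$ to $z^{(r)}$. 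Lifting these round-by-round attacks to the full $S\longrightarrow T$ channel by attacking no edges before $\mathcal{E}_1$ nor strictly after $\mathcal{E}_1$ in any round and applying the witnessing $\mathcal{E}_1$-attacks verbatim is within budget (since $\mathcal{U}\cap\mathcal{E}_1\subseteq \mathcal{U}$), and the deterministic propagation from $\mathcal{E}_2$ onward produces the same vector $w \in \alp^{i\deg^-(T)}$ from both $x$ and $y$, contradicting the unambiguity of $C$.

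The main point to be careful about is the lifting step: one must verify that performing the $\mathcal{E}_1$-attacks in isolation (with zero attacks elsewhere) really does realize $z^{(r)}$ at $\mathcal{E}_2$ in round $r$ on the full channel, and that the transfer from $\mathcal{E}_2$ to $\textup{in}(T)$ depends only on $z^{(r)}$, not on $x^{(r)}$ or $y^{(r)}$. Both facts follow from the property that edges in $\mathcal{E}_2$ precede those in $\textup{in}(T)$ under $\preccurlyeq$, as encoded in Remark~\ref{rmk:imm}, so that the portion of the network strictly after $\mathcal{E}_1$ is computed from the values carried on $\mathcal{E}_1$ (and the code $\mathcal{F}$) alone. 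No ingredients beyond the one-shot argument of~\cite[Theorem~8.6]{beemer2023network} are needed; the only novelty is recording the whole argument on the $i$-fold product channel and checking that the per-round adversary budget of $t$ edges of $\mathcal{U}$ is preserved under the lift.
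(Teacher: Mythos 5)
Your proof is correct, but it takes a more hands-on route than the paper's. The paper factors the restricted channel as a concatenation
\[
\Omega_1^i \blacktriangleright \Omega^i[\mathcal{N},\alp,\mathcal{F},\mathcal{E}_1 \longrightarrow \mathcal{E}_2, \mathcal{U}\cap\mathcal{E}_1,t] \blacktriangleright \Omega_2^i
\]
(using~\cite[Proposition~18]{ravagnani2018} to push the $i$-th power through concatenation), observes that $\Omega^i[\mathcal{N},\alp,\mathcal{F},S\to T,\mathcal{U},t]$ is coarser than this concatenation because $\mathcal{U}\cap\mathcal{E}_1\subseteq\mathcal{U}$, and then finishes in one stroke with the finer-channel and concatenation-capacity lemmas~\cite[Propositions~3.6 and~3.9]{beemer2023network}. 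You instead unfold those lemmas into an explicit code-level argument: you push an unambiguous code $C$ for the $S\to T$ channel forward along the deterministic round-by-round map $\phi$ to $\mathcal{E}_1$, then verify injectivity (via the zero-attack propagation) and unambiguity of $\phi(C)$ (via the per-round lifting of an $\mathcal{E}_1$-attack into the full channel, noting $\mathcal{U}\cap\mathcal{E}_1\subseteq\mathcal{U}$ keeps the budget). Both use the same underlying observation — that the adversary in the restricted channel is a legal adversary for the full one, round by round — and both lean on Remark~\ref{rmk:imm} to say the transfer between cuts is a function of $\mathcal{E}_1$-values alone. What the paper's route buys is brevity and reuse of established channel machinery; what your route buys is self-containment and an explicit record of why the multishot case goes through (the lifting is applied independently in each of the $i$ rounds, which is exactly where the product-channel structure is used). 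One minor thing worth stating explicitly when you write this up: the injectivity step and the unambiguity step are logically separate (injectivity ensures $|\phi(C)|=|C|$, unambiguity ensures $\phi(C)$ is an admissible code for the reduced channel), and you've correctly supplied both, but the phrasing "attacking no edges before $\mathcal{E}_1$ nor strictly after $\mathcal{E}_1$" should be tightened to "attack only edges of $\mathcal{U}\cap\mathcal{E}_1$" to match the fan-out set of the reduced channel as in~\eqref{chtd}.
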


\begin{proof} 
Let $\mathcal{F}$ be a network code for $(\mathcal{N},\alp)$. Using a similar approach as in~\cite[Theorem~8.6]{beemer2023network}, consider the scenario where an adversary $\ad_{\mN}$ can corrupt up to $t$ edges of~$\mathcal{U}\cap \mathcal{E}_1$. This scenario is modeled by the channel \[\Omega^* := \Omega[\mathcal{N},\alp,\mathcal{F},\textup{out}(S) \longrightarrow \mathcal{E}_1, \mathcal{U}\cap\textup{out}(S) ,0] \concat \Omega[\mathcal{N},\alp,\mathcal{F},\mathcal{E}_1 \longrightarrow \mathcal{E}_2, \mathcal{U}\cap \mathcal{E}_1 ,t] \concat\] \[\Omega[\mathcal{N},\alp,\mathcal{F},\mathcal{E}_2 \longrightarrow \textup{in}(T), \mathcal{U}\cap \mathcal{E}_2 ,0].\] We are interested in $i$ uses for the channel, therefore we now consider the power channel~$(\Omega^*)^i.$ Using~\cite[Proposition 18]{ravagnani2018}, we have 
\begin{multline*}
    (\Omega^*)^i = \Omega^i[\mathcal{N},\alp,\mathcal{F},\textup{out}(S) \longrightarrow \mathcal{E}_1, \mathcal{U}\cap \textup{out}(S) ,0]) \blacktriangleright \\ (\Omega^i[\mathcal{N},\alp,\mathcal{F},\mathcal{E}_1 \longrightarrow \mathcal{E}_2, \mathcal{U}\cap \mathcal{E}_1 ,t])  \blacktriangleright  \Omega^i[\mathcal{N},\alp,\mathcal{F},\mathcal{E}_2 \longrightarrow \textup{in}(T), \mathcal{U}\cap \mathcal{E}_2 ,0]).
\end{multline*}
We notice that the former and latter channels are deterministic (see \cite[Definition 3.1]{beemer2023network}) since we consider an adversarial power of 0. These channels describe the
transfer from the source to $\mE_1$ and from $\mE_2$ to $T$, respectively. For ease of notation, we let 

\[
\Omega^i_1 := \Omega^i[\mathcal{N},\alp,\mathcal{F},\textup{out}(S) \longrightarrow \mathcal{E}_1, \mathcal{U}\cap \textup{out}(S) ,0]\] and 
\[\Omega^i_2 := \Omega^i[\mathcal{N},\alp,\mathcal{F},\mathcal{E}_2 \longrightarrow \textup{in}(T), \mU \cap \mE_2, 0].
\]
We note that by Definition \ref{coarser}, the channel $\Omega^i[\mathcal{N},\alp,\mathcal{F},S \longrightarrow T,\mathcal{U},t]$ is coarser than the channel $\Omega_1^i \blacktriangleright \Omega^i[\mathcal{N},\alp,\mathcal{F},\mathcal{E}_1 \longrightarrow \mathcal{E}_2, \mathcal{U}\cap \mathcal{E}_1 ,t] \blacktriangleright \Omega^i_2$ since in the latter channel, the errors can only occur on $\mathcal{U} \cap \mathcal{E}_1$, a subset of $\mathcal{U}$. By~\cite[Proposition 3.6]{beemer2023network} and~\cite[Proposition~3.9]{beemer2023network}, we have that \[\Omega^i[\mathcal{N},\alp,\mathcal{F},S \longrightarrow T,\mathcal{U},t]) \geq  \Omega_1^i \blacktriangleright \Omega^i[\mathcal{N},\alp,\mathcal{F},\mathcal{E}_1 \longrightarrow \mathcal{E}_2, \mathcal{U}\cap \mathcal{E}_1 ,t] \blacktriangleright \Omega^i_2\] and \[C_1(\Omega^i[\mathcal{N},\alp,\mathcal{F},S \longrightarrow T,\mathcal{U},t])) \leq C_1( \Omega^i[\mathcal{N},\alp,\mathcal{F},\mathcal{E}_1 \longrightarrow \mathcal{E}_2, \mathcal{U}\cap \mathcal{E}_1 ,t]).\] Since we chose $\mathcal{F}$ arbitrarily, this applies for the maximum $\mathcal{F}$, we see that \[C_1(\Omega^i[\mathcal{N},\alp,\mathcal{F},S \longrightarrow T,\mathcal{U},t])) \leq \max_{\mathcal{F}} C_1( \Omega^i[\mathcal{N},\alp,\mathcal{F},\mathcal{E}_1 \longrightarrow \mathcal{E}_2, \mathcal{U}\cap \mathcal{E}_1 ,t])\] as desired.
\end{proof}

By taking $\mE_1 = \mE_2$, we recover a multishot version of~\cite[Corollary 66]{ravagnani2018} as a corollary of Theorem \ref{dcsb}.

\begin{corollary}[Multishot Cut-Set Bound] \label{mulitcut} Let $\mN$ be a network, let $\alp$ be an alphabet, let~$T \in \bold{T}$ and let~$\mE' \subseteq \mE$ be an edge cut. Let $\mU \subseteq \mE$ be a set of edges and let $\mF$ be a network code. We have that
\[\displaystyle C_1(\Omega^i[\mN, \alp, \mF, S \longrightarrow T, \mU, t]) \leq \max_{\mF} C_1(\Omega^i[\mN, \alp, \mF, \mE' \longrightarrow \mE', \mU \cap \mE', t]),\] where the maximum is taken over all network codes $\mF$ for $(\mN, \alp).$
\end{corollary}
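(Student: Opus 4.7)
The plan is to derive Corollary~\ref{mulitcut} as an immediate specialization of Theorem~\ref{dcsb} obtained by setting $\mE_1 = \mE_2 = \mE'$. First I would verify that this substitution is legal: the relation $\preccurlyeq$ on edge cuts is reflexive, since a directed path of length zero witnesses $e \preceq e$ for every $e \in \mE$, so in particular $\mE' \preccurlyeq \mE'$ holds and the hypothesis of Theorem~\ref{dcsb} is satisfied.

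Next I would plug this choice into the conclusion of Theorem~\ref{dcsb}. The inner channel $\Omega^i[\mN,\alp,\mF,\mE_1 \longrightarrow \mE_2, \mU \cap \mE_1, t]$ becomes exactly $\Omega^i[\mN,\alp,\mF,\mE' \longrightarrow \mE', \mU \cap \mE', t]$, while the left-hand side $C_1(\Omega^i[\mN,\alp,\mF,S \longrightarrow T,\mU,t])$ is unchanged. Taking the maximum over all network codes $\mF$ on the right then yields exactly the bound claimed in Corollary~\ref{mulitcut}.

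The only point requiring a brief check is that the degenerate channel $\Omega^i[\mN,\alp,\mF,\mE' \longrightarrow \mE', \mU \cap \mE', t]$ is well defined in the framework of~\eqref{chtd}. By Remark~\ref{rmk:imm}, this channel delivers the value of each edge in the target cut as a function of its immediate predecessors in the source cut; when the two cuts coincide, each edge is its own predecessor and the channel reduces to the Hamming-ball channel on $\alp^{|\mE'|}$ of radius $t$ supported on the coordinates indexed by $\mU \cap \mE'$. This is a valid adversarial channel and inherits the composition semantics used in the proof of Theorem~\ref{dcsb}.

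I do not foresee any real obstacle: the corollary is a bookkeeping specialization of Theorem~\ref{dcsb}, and the only item that merits mention is the reflexivity of $\preccurlyeq$ together with the consistent interpretation of the ``$\mE' \longrightarrow \mE'$'' channel. Accordingly the proof can be written in essentially one sentence, invoking Theorem~\ref{dcsb} with $\mE_1 = \mE_2 = \mE'$.
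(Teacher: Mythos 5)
Your proposal matches the paper exactly: the authors state just before the corollary that it follows from Theorem~\ref{dcsb} by taking $\mE_1 = \mE_2$, which is precisely the specialization you propose. Your additional remarks on the reflexivity of $\preccurlyeq$ and the interpretation of the $\mE' \longrightarrow \mE'$ channel as a Hamming-ball channel are harmless elaborations of the same one-line argument.
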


Notice that the adversary $\ad_{\mN}$ is restricted to the same proper subset of edges $\mU$ for each uses of the network, but there is no restriction on changing the edges attacked inside of $\mU.$ The case where $\mU$ changes each transmission round remains open. Let $\mN$ be a network and let $\mE_1,\mE_2 \subseteq \mE$ be edge cuts such that $\mE_1$ precedes $\mE_2$. For some $e \in \mE_2$ and~$e' \in \mE_1$, we can say that $e'$ is an \textbf{immediate predecessor} of~$e$ in~$\mE_1$ if~$e' \preccurlyeq e$ and there does not exist $e'' \in \mE_1$ with $e' \preccurlyeq e'' \preccurlyeq e$ and $e' \neq e''$. As in~\cite{beemer2023network}, the vertices of $V_1$ are in bijection with the edges of $\mE_1$ and the vertices of $V_2$ with the edges of $\mE_2$. We say that a vertex $V \in V_1$ is connected to vertex~$V' \in V_2$ if and only if the edge of~$\mE_1$ corresponding to $V$ is an immediate predecessor of the edge of $\mE_2$ corresponding to~$V'$. Let~$\mE_S'$ be the set of edges directly connected with the source of $\mN'$, which can be identified with the edges of $\mE_1$ (consistent with how we identified these with the vertices in $V_1$). The next result is a consequence of Theorem \ref{dcsb} that extends~\cite[Corollary 8.7]{beemer2023network} to the multishot setting and allows for an easier application of the theorem. The proof is inspired by the one of~\cite[Corollary~8.7]{beemer2023network}.

\begin{corollary}\label{corD} Let $\mN$ be a network, $\alp$ a network alphabet, $\mU \subseteq \mE$ a set of edges and~$t \geq 0$.
Let $\mE_1$ and $\mE_2$ be edge cuts between $S$ and $T$ such that $\mE_1$ precedes $\mE_2$ and $\mE_S'$ be as defined above. Let $\mN'$ be a simple $3$-level network. Let $\ad_{\mN}$ be the adversary able to corrupt up to $t$ edges of $\mN$ from $\mU$ and let $\ad_{\mN'}$ be the adversary able to corrupt $t$ edges of $\mN'$ from $\mU\cap \mE'_S$. Then,
\begin{equation}
\C_i(\mN, \alp, \ad_{\mN}) \le \C_i(\mN', \alp, \ad_{\mN'}).
\end{equation}
    
\end{corollary}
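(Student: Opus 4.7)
The strategy is to apply the Multishot Double Cut-Set Bound (Theorem~\ref{dcsb}) with the edge cuts $\mE_1 \preccurlyeq \mE_2$ and then realize the resulting intermediate channel as the end-to-end channel of the simple $3$-level network $\mN'$. This mirrors the one-shot argument of~\cite[Corollary~8.7]{beemer2023network}, carrying the $i$-fold product channel through every step.

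First, I would apply Theorem~\ref{dcsb} at a fixed terminal $T \in \bold{T}$ to obtain
\[
C_1(\Omega^i[\mN,\alp,\mF,S \longrightarrow T,\mU,t]) \le \max_{\mF} C_1(\Omega^i[\mN,\alp,\mF,\mE_1 \longrightarrow \mE_2, \mU \cap \mE_1,t]).
\]
Dividing by $i$ and taking the minimum over $T \in \bold{T}$ bounds $\C_i(\mN,\alp,\ad_{\mN})$ by the $i$-shot capacity associated with the intermediate channel from $\mE_1$ to $\mE_2$.

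Next, I would identify this intermediate channel with the end-to-end channel of $\mN'$. By construction, the first-level intermediate nodes of $\mN'$ are in bijection with the edges of $\mE_1$, the second-level intermediate nodes are in bijection with the edges of $\mE_2$, and the edges between the two levels encode precisely the immediate-predecessor relation from Remark~\ref{rmk:imm}. Given any network code $\mF$ for $\mN$, I would construct a network code $\mF'$ for $\mN'$ by composing the intermediate functions of $\mF$ along the subpaths of $\mN$ that realize each immediate-predecessor relation between $\mE_1$ and $\mE_2$; conversely, every $\mF'$ for $\mN'$ lifts to such an $\mF$. Under the identification $\mE_S' \leftrightarrow \mE_1$, the adversarial set $\mU \cap \mE_1$ coincides with $\mU \cap \mE_S'$, and hence the channels $\Omega^i[\mN,\alp,\mF,\mE_1 \longrightarrow \mE_2,\mU \cap \mE_1,t]$ and $\Omega^i[\mN',\alp,\mF',\textup{out}(S') \longrightarrow \textup{in}(T'),\mU \cap \mE_S',t]$ produce identical fan-out sets on every input, so an unambiguous code for one is unambiguous for the other.

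Taking the maximum over $\mF$, which by the above bijection equals the maximum over network codes of $\mN'$, and using the definition of $i$-shot capacity on the $\mN'$ side yields $\C_i(\mN,\alp,\ad_{\mN}) \le \C_i(\mN',\alp,\ad_{\mN'})$. The main obstacle is the careful bookkeeping needed to verify that the correspondence $\mF \leftrightarrow \mF'$ is compatible with the $i$-fold product structure and that unambiguity is preserved in both directions. This is essentially the content of the one-shot Corollary~8.7 of~\cite{beemer2023network}; the $i$-shot extension is then automatic because the $i$-fold product of channels inherits the coarser/finer relations already exploited in the proof of Theorem~\ref{dcsb}.
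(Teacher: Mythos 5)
Your proposal follows essentially the same route as the paper: invoke Theorem~\ref{dcsb} to reduce to the $\mE_1\!\to\!\mE_2$ channel, then realize that channel as the end-to-end channel of $\mN'$ by constructing a matching $\mF'$ from $\mF$. One small overstatement: you claim the correspondence $\mF\leftrightarrow\mF'$ is a bijection so that the two maxima are \emph{equal}, but only the forward direction (each $\mF$ yields some $\mF'$ with the same fan-out sets, giving $\max_\mF \le \max_{\mF'}$) is needed or established; the paper carefully sidesteps this by decomposing the $\mE_1\!\to\!\mE_2$ channel as a concatenation of the adversarial action on $\mE_1$ with the error-free transport $\Omega^i$ and then matching each factor to $\mN'$.
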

\begin{proof}
We wish to show that
\begin{equation*}
    \C_i(\Omega[\mN,\mA,\mF,\mE_1 \longrightarrow \mE_2,\mU \cap \mE_1,t]) \le \C_i(\mN',\mA, \ad_{\mathcal{N}'})
\end{equation*} for every network code $\mF$ for the pair~$(\mN,\mA)$, where $\mF$ is the same for all $i$ transmission rounds. This in turn establishes the corollary due to Proposition~\ref{dcsb}. 
Let $\mF$ be a network code and  let~$\Omega^i:=\Omega^i[\mN,\mA,\mF,\mE_1 \longrightarrow \mE_2,\mU \cap \mE_1,0]$.
By the Remark~\ref{rmk:imm}, we know that the channel $\Omega^i$
expresses the value of each edge $ e \in \mE_2$ as a function of the values of its immediate predecessors in $\mE_1$. 
By the construction of $\mN'$, we can find a network code $\mF'$ (that depends on $\mF$) for $(\mN',\mA)$ such that
\begin{equation} \label{cc1}
\Omega^i=\Omega^i[\mN',\mA,\mF',\mE'_S \longrightarrow \textup{in}(T),\mU \cap \mE'_S,0],
\end{equation}
where the edges of $\mE_1$ and $\mE_2$ are identified with those of $\mE'_S$ and $\textup{in}(T)$ in $\mN'$. We observe that the channel $\Omega^i[\mN,\mA,\mF,\mE_1 \longrightarrow \mE_2,\mU \cap \mE_1,t]$ can be written as the concatenation
\begin{equation} \label{cc2}
\begin{aligned}
\Omega^i[\mN,\mA,\mF,\mE_1 \longrightarrow \mE_2,\mU \cap \mE_1,t] &= (\Omega[\mN,\mA,\mF,\mE_1 \longrightarrow \mE_1,\mU \cap \mE_1,t] \blacktriangleright \Omega)^i \\
&= \Omega^i[\mN,\mA,\mF,\mE_1 \longrightarrow \mE_1,\mU \cap \mE_1,t] \blacktriangleright \Omega^i  
\end{aligned}
\end{equation}
where the first channel in the concatenation simply describes the action of the adversary on the edges of $\mU \cap \mE_1$ over $i$ transmission rounds and the last line comes from~\cite[Proposition~18]{ravagnani2018}.
By \eqref{cc1} and \eqref{cc2} and by identifying between $\mE_1$ and $\mE_S'$,
we can 
\begin{equation}\label{lll}
\begin{aligned}
    \Omega^i[\mN',\mA,\mF',\mE'_S \longrightarrow \textup{in}(T),&\mU \cap \mE'_S,t]\\ &= \Omega^i[\mN',\mA,\mF',\mE'_S \longrightarrow \mE'_S,\mU \cap \mE'_S,t]   \\ 
    &  \qquad \blacktriangleright
\Omega^i[\mN',\mA,\mF',\mE'_S \longrightarrow \textup{in}(T),\mU \cap \mE'_S,0]  \\
&= \Omega^i[\mN,\mA,\mF,\mE_1 \longrightarrow \mE_1,\mU \cap \mE_1,t] \blacktriangleright \Omega^i \\
&=\Omega^i[\mN,\mA,\mF,\mE_1 \longrightarrow \mE_2, \mU \cap \mE_1,t]. 
\end{aligned}
\end{equation}
By definition, we have
$C_i(\mN',\mA,\mU \cap \mE'_S,t) \ge 
C_1(\Omega^i[\mN',\mA,\mF',\mE'_S \longrightarrow \textup{in}(T),\mU \cap \mE'_S,t])$,
and combining this with~\eqref{lll},
leads to
\begin{align*}
C_i(\mN', \alp, \ad_{\mN'}) &\ge C_1(\Omega^i[\mN',\mA,\mF',\mE'_S \longrightarrow \textup{in}(T),\mU \cap \mE'_S,t]) \\ &= 
\C_i(\Omega[\mN,\mA,\mF,\mE_1 \longrightarrow \mE_2,\mU \cap \mE_1,t]).    
\end{align*}
Since  we assumed that $\mF$ was an arbitrary for $(\mN,\mA)$ for $i$ transmission rounds, the result follows.
\end{proof}

\subsection{Multishot Capacity of the Butterfly Network}

We now will apply Corollary \ref{corD} to the Butterfly Network $\mathcal{B}$ in Figure \ref{fig:butt}. The Butterfly Network $\mathcal{B}$ can be reduced to a simple $2$-level network that in turn is the Diamond Network~$\mathcal{D}$~\cite[Section~VIII]{beemer2023network}. The one-shot capacity of $\mathcal{B}$ is $C_1(\mathcal{B}, \ad_{\mathcal{B}}) = \log_{|\alp|}(|\alp|-1)$ according to~\cite[Theorem 8.9]{beemer2023network}. We will show that in Scenario \ref{scenario1},  
\[C_i(\mathcal{B}, \alp, \ad_{\mathcal{B}}) = \frac{\log_{|\alp|}(|\alp|^i -1)}{i},\] which demonstrates a gain in capacity over multiple uses of $\mathcal{B}$ in this scenario. In contrast, we will show that in Scenario \ref{scenario2},
\[C_i(\mathcal{B}, \alp, \ad_{\mathcal{B}}) = \log_{|\alp|}(|\alp| -1)=C_1(\mathcal{B}, \alp, \ad_{\mathcal{B}}).\] 

\paragraph{Scenario~\ref{scenario1} for $\mathcal{B}$.} In this scenario, the adversary is restricted to corrupting an edge of the set $\mU = \{e_1,e_2,e_3,e_4,e_5,e_6,e_9\}$ and cannot change the edge attacked each transmission round. We start with the following proposition that constructs an unambiguous code $C$ with $|C| =  |\alp|^i - 1$ and computes the multishot capacity $\mathcal{B}$. We use the same network code $\mathcal{F}$ as in \cite[Theorem 8.9]{beemer2023network} and extend the strategy in the proof.

\begin{proposition}\label{lbn} Let $i \in \N$ and let $\alp$ be an alphabet. Let $\ad_{\mathcal{B}}$ be an adversary able to corrupt up to an edge of $\mU$ and let $C \subseteq \alp^{4i}$ be an unambiguous code for $(\mathcal{B}, \alp,\mathcal{F})$. Then, the $i$-shot capacity of $\mathcal{B}$ in Scenario \ref{scenario1} is \[C_i(\mathcal{B}, \alp, \ad_{\mathcal{B}}) = \frac{\log_{|\alp|}(|\alp|^i - 1)}{i}.\]
\end{proposition}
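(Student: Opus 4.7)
The plan is to prove both bounds separately, treating the upper bound via the reduction machinery of Corollary \ref{corD} and the lower bound via an explicit construction that extends the one-shot strategy of \cite[Theorem 8.9]{beemer2023network}.

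For the upper bound, I would invoke the reduction of $\mathcal{B}$ to the Diamond Network $\mathcal{D}$ described in \cite[Section~VIII]{beemer2023network}. Let $\mE_1 = \{e_1,e_2,e_3,e_4\}$ and $\mE_2 = \{e_5,e_8,e_9\}$ be edge cuts between $S$ and the terminals with $\mE_1 \preccurlyeq \mE_2$. The immediate-predecessor structure between $\mE_1$ and $\mE_2$ (combined with the fact that $V_3$--$V_4$--$\{T_1,T_2\}$ forces $e_9$ to carry a single common value) produces, via the construction leading to Corollary \ref{corD}, a simple $3$-level network whose induced $2$-level network is precisely the Diamond Network $\mathcal{D}$, with the vulnerable set $\mU\cap\mE_S'$ mapped to the dashed edges of $\mD$. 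Applying Corollary \ref{corD} and then Proposition \ref{diam1} gives
\[
C_i(\mathcal{B}, \alp, \ad_{\mathcal{B}}) \leq C_i(\mathcal{D}, \alp, \ad_{\mathcal{D}}) = \frac{\log_{|\alp|}(|\alp|^i - 1)}{i}.
\]

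For the lower bound, I would construct an unambiguous code $C \subseteq \alp^{4i}$ of cardinality $|\alp|^i - 1$ by extending coordinate-wise the strategy of \cite[Theorem 8.9]{beemer2023network}. Fix a distinguished vector $\star \in \alp^i$ and set
\[
C = \{(a \mid a \mid a \mid a) : a \in \alp^i \setminus \{\star\}\},
\]
where the four blocks of length $i$ correspond to the values sent on $e_1,e_2,e_3,e_4$ across the $i$ transmission rounds. Define $\mathcal{F}$ so that each intermediate node operates exactly as in the one-shot scheme of \cite[Theorem 8.9]{beemer2023network}, applied coordinate-wise over the $i$ rounds: $V_1$ and $V_2$ compare their two incoming blocks and forward a flag vector in the $\star$-direction whenever they detect a disagreement, $V_3$ combines its inputs in the same manner and propagates through $e_9$, and $V_4$ redistributes to $T_1$ and $T_2$. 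Each terminal then decodes by trusting the direct path from $V_1$ or $V_2$ unless a flag rules it out, in which case it falls back on the information routed through $V_3$--$V_4$.

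The main obstacle will be verifying unambiguity for this $C$ in Scenario \ref{scenario1}. The essential point is that the adversary's single corrupted edge is \emph{fixed} across all $i$ rounds: thus, whenever the corrupted block differs from $a$, all $i$ coordinates line up consistently with a single-edge attack, so the detect-and-flag mechanism triggers in the same way it does in the one-shot scheme. I would handle this by a case analysis over the seven vulnerable edges $e_1,e_2,e_3,e_4,e_6,e_7,e_9$, reducing each case to the corresponding one-shot argument of \cite[Theorem 8.9]{beemer2023network} applied block-by-block; the delicate case is $e_9$, where the attack affects both terminals simultaneously, but the flag propagation from $V_1$ and $V_2$ (together with the reservation of $\star$) ensures that the fan-out sets of distinct codewords remain disjoint. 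Combining $|C| = |\alp|^i - 1$ with the upper bound yields the claimed equality.
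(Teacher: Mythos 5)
Your proposal follows the same two-part structure as the paper's proof: the upper bound comes from the multishot Double Cut-Set machinery (Theorem~\ref{dcsb} / Corollary~\ref{corD}) together with the observation that the induced simple $2$-level network of $\mathcal{B}$ is the Diamond Network, after which Proposition~\ref{diam1} finishes the job; the lower bound comes from an explicit codebook of size $|\alp|^i - 1$ obtained by reserving one vector, paired with a flag-based network code extending the one of \cite[Theorem~8.9]{beemer2023network}. That is the paper's route.

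There is, however, one place in your lower-bound description that needs to be corrected, because as phrased it would not yield an unambiguous code of the claimed size. You say the one-shot scheme is ``applied coordinate-wise over the $i$ rounds,'' and in the same sentence that $V_1$ and $V_2$ ``compare their two incoming blocks.'' These two readings are different, and only the block-level one works. If $V_1$ runs the detect-and-flag test round by round, then its outgoing edge carries in coordinate $j$ either $a_j$ (rounds of agreement) or $\star$ (rounds of disagreement), and the resulting fan-out sets of two distinct codewords $a, a' \in \alp^i \setminus \{(\star,\ldots,\star)\}$ are \emph{not} disjoint whenever $a$ can be obtained from $a'$ by replacing some coordinates of $a'$ with $\star$: corrupting $e_9$ for $a$, versus corrupting $e_1$ for $a'$ to a vector that agrees with $a'$ exactly where $a$ agrees with $a'$, produces the same terminal observation. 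Under the round-by-round reading the unambiguous codebook is forced to be an antichain in the ``replace coordinates by $\star$'' partial order on $\alp^i$, which caps $|C|$ at $(|\alp|-1)^i$, giving no multishot gain at all. What the paper's scheme actually does, and what makes $|C| = |\alp|^i - 1$ achievable, is a whole-block comparison: $V_1$ (resp.\ $V_2$) compares the full $i$-round block on $e_1$ against the full $i$-round block on $e_2$ and emits the single reserved vector $(\star,\ldots,\star) \in \alp^i$ if they differ \emph{anywhere}, so each intermediate output block is either exactly $a$ or exactly the flag. Please state this block-level comparison explicitly and drop the ``coordinate-wise'' phrasing; with that fix, the rest of your case analysis (and the observation that the fixed adversary of Scenario~\ref{scenario1} cannot switch edges between rounds, so only one of the seven dashed edges is affected throughout) matches the paper's argument.
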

\begin{proof} We reserve $\star \in \mathcal{A}$ and want to show that the code \[C = \{(c_1,\ldots,c_i,c_1,\ldots,c_i,c_1,\ldots,c_i,c_1,\ldots,c_i) \in \mA^{4i}\,:\, (c_1,\ldots,c_i) \neq (\star,\ldots,\star)\} \subseteq \alp^{4i}\] is unambiguous for $\mathcal{B}$ for a certain network code $\mF$. We have $c = (a\mid a\mid a \mid a)$, with $a \in \mA^{i}\setminus \{(\star,\ldots,\star)\}$, for any $c \in C$. Let $\mF$ where $V_1$, $V_2$, $V_3$ and $V_4$ proceed as follows: ff the symbols on $V_1$ and $V_2$'s incoming edges are equal, they forwards
that symbol; otherwise they output $(\star, \ldots, \star)$. If one of the two received symbols received by vertex $V_3$ is different from $(\star, \ldots, \star)$,
then $V_3$ forwards that symbol. If both received symbols are
different from $(\star, \ldots, \star)$, then $V_3$ outputs $(\star, \ldots, \star)$ over $e_9$. The vertex $V_4$ just
forwards the symbol received. Suppose the adversary corrupts $e_1$ or $e_2$ and changes the symbol. One can easily show that \[\Omega^i[\mathcal{B}, \alp,\mathcal{F}, S \longrightarrow T, \mU, 1]((a \mid a \mid a \mid a)) = \{((\star,\ldots, \star)\mid a)\,:\,a \in \mA^{i}\setminus \{(\star,\ldots,\star)\}\}\]
for any $a \in \mA^i\setminus \{(\star,\ldots,\star)\}$ and terminal $T_{1}$ trusts the edge $e_{10}$ and terminal $T_2$ trusts the edge $e_{8}$. Similarly, if the adversary corrupts $e_3$ or $e_4$, we have that \[\Omega^i[\mathcal{B}, \alp,\mathcal{F},S \longrightarrow T, \mU, 1]((a\mid a \mid a \mid a)) = \{(a \mid (\star,\ldots, \star))\,:\,a \in \mA^{i}\setminus \{(\star,\ldots,\star)\}\}\] for any $a \in \alp^{i}\setminus\{(\star,\ldots,\star)\}$ and terminal $T_1$ trusts $e_{5}$ and Terminal $T_2$ trusts $e_{11}$. It remains to show that $C$ is unambiguous. Let~$c,c' \in C$. It follows that \[\Omega^i[\mathcal{B},\alp,\mathcal{F},S \longrightarrow T, \mU, 1](c)\cap\Omega^i[\mathcal{B}, \alp,\mathcal{F}, S \longrightarrow T, \mU, 1](c') \neq \emptyset\] for $T \in \bold{T}$ if and only if $c = c'$, Therefore,  implying that $C$ is unambiguous. Lastly, one can easily see that $|C| \geq |\mA|^i -1$. Thus, $C_i(\mathcal{B}, \alp, \ad_{\mathcal{B}}) \geq \frac{\log_{|\alp|}(|\alp|^i - 1)}{i}$.\\

The upper bound follows from Theorem \ref{dcsb} and the observation in~\cite{beemer2023network} that the reduction of~$\mathcal{B}$ to a simple $2$-level network is exactly the Diamond Network $\mathcal{D}$. Therefore, we have that $C_i(\mathcal{B},\alp, \ad_{\mathcal{B}}) \leq C_i(\mathcal{D}, \alp, \ad_{\mathcal{D}})$. By Proposition \ref{diam1}, we therefore have \[C_i(\mathcal{B}, \alp, \ad_{\mathcal{B}}) \leq \frac{\log_{|\alp|}(|\alp|^i - 1)}{i},\]
as desired.
\end{proof}

We observe here that using the Butterfly Network $\mathcal{B}$ multiple times for communication also provides a gain in capacity. This consequence is suggested by the fact that the network's multishot capacity is upper bounded by that of the Diamond Network, which we showed that there is gain in using $\mD$ multiple times for communication in Scenario~\ref{scenario1}.

\paragraph{Scenario~\ref{scenario2} for $\mathcal{B}$.} In this scenario, the adversary can attack an edge from the set~$\mU = \{e_1,e_2,e_3,e_4,e_5,e_6,e_9\}$ and can change the edge attacked each transmission round. We have the following proposition.

\begin{proposition}\label{cbn} Let $\alp$ be an alphabet and let $\ad_{\mathcal{B}}$ be an adversary able to attack up to one edge of $\mU$. Then, the $i$-shot capacity of the Butterfly Network $\mathcal{B}$ in Scenario~\ref{scenario2} is 
    \[C_i(\mathcal{B}, \alp, \ad_{\mathcal{B}}) = \log_{|\alp|}(|\alp| - 1).\]
\end{proposition}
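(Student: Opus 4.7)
The plan is to prove the equality by establishing matching upper and lower bounds, using the one-shot result $C_1(\mathcal{B}, \alp, \ad_{\mathcal{B}}) = \log_{|\alp|}(|\alp|-1)$ from \cite[Theorem~8.9]{beemer2023network} and the known multishot capacity of $\mD$ in Scenario~\ref{scenario2} (Proposition~\ref{diam2}).

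For the lower bound $C_i(\mathcal{B}, \alp, \ad_{\mathcal{B}}) \geq \log_{|\alp|}(|\alp|-1)$, I would apply the concatenation strategy: use the one-shot capacity-achieving code $C_1$ from \cite[Theorem~8.9]{beemer2023network} independently in each of the $i$ transmission rounds. Since in every round the adversary can corrupt at most one edge of $\mU$, the outer code $C_1^i$ is unambiguous for $\Omega^i$, giving $|C_1^i| = (|\alp|-1)^i$ and hence the desired rate. Alternatively, this is an immediate consequence of \cite[Proposition~12]{ravagnani2018}, which states that $C_i \geq C_1$ for any channel.

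For the upper bound, I would invoke the Multishot Double Cut-Set Bound (Theorem~\ref{dcsb}), or more directly Corollary~\ref{corD}, which reduces the multishot capacity of a larger network to that of an induced simple $2$-level network. As already observed in the Scenario~\ref{scenario1} proof (Proposition~\ref{lbn}) and in \cite[Section~VIII]{beemer2023network}, the simple $2$-level reduction of $\mathcal{B}$ is exactly the Diamond Network $\mD$, with the vulnerable edge set $\mU = \{e_1,e_2,e_3,e_4,e_5,e_6,e_9\}$ mapping onto the three dashed source-outgoing edges of $\mD$. Hence
\[
C_i(\mathcal{B}, \alp, \ad_{\mathcal{B}}) \leq C_i(\mathcal{D}, \alp, \ad_{\mathcal{D}}).
\]
Applying Proposition~\ref{diam2}, which gives $C_i(\mD, \alp, \ad_{\mD}) = \log_{|\alp|}(|\alp|-1)$ precisely in Scenario~\ref{scenario2}, yields the matching upper bound.

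The main obstacle I foresee is ensuring that the reduction from $\mathcal{B}$ to $\mD$ behaves correctly under Scenario~\ref{scenario2}, where the adversary is free to choose a new edge to corrupt in each of the $i$ rounds. Concretely, one must verify that an adversary allowed to pick any edge of $\mU \subseteq \mE(\mathcal{B})$ per round corresponds, under the reduction, to an adversary on $\mD$ allowed to pick any edge of $\{e_1,e_2,e_3\}$ per round — i.e., that the per-round freedom of the adversary is preserved by the construction of $\mN'$ in Corollary~\ref{corD}. Since Theorem~\ref{dcsb} and Corollary~\ref{corD} are stated without any restriction on the adversary choosing the same $t$ edges across rounds (only on the set $\mU$ being fixed, which it is here), this reduction goes through, and the two bounds together give the claimed equality.
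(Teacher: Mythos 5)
Your proposal is correct and matches the paper's own proof: the upper bound comes from the reduction $C_i(\mathcal{B}, \alp, \ad_{\mathcal{B}}) \leq C_i(\mD, \alp, \ad_{\mD})$ via the Multishot Double Cut-Set Bound together with Proposition~\ref{diam2}, and the lower bound comes from applying the one-shot strategy of \cite[Theorem~8.9]{beemer2023network} independently over the $i$ rounds. Your additional sanity check — that Theorem~\ref{dcsb} and Corollary~\ref{corD} impose no restriction on the adversary re-choosing edges within $\mU$ across rounds, so the reduction is valid in Scenario~\ref{scenario2} — is a point the paper leaves implicit, and making it explicit is a small but genuine improvement in exposition.
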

\begin{proof}
    The proof follows from the fact that $C_i(\mathcal{B}, \alp, \ad_{\mathcal{B}}) \leq C_i(\mD, \alp, \ad_{\mD})$  and we showed that~$C_i(\mD, \alp, \ad_{\mD}) = \log_{|\alp|}(|\alp| - 1)$. Therefore,  $C_i(\mathcal{B}, \alp, \ad_{\mathcal{B}}) \leq \log_{|\alp|}(|\alp|-1).$ The lower bound comes from applying the strategy in~\cite[Theorem 8.9]{beemer2023network} independently, which means applying the strategy $i$ times over $i$ transmission rounds.
\end{proof}

The previous result tells us that there is no gain in using $\mathcal{B}$ multiple times for communication in Scenario \ref{scenario2}. Therefore, \[C_i(\mathcal{B}, \alp, \ad_{\mathcal{B}}) = C_1(\mathcal{B}, \alp, \ad_{\mathcal{B}}) = \log_{|\alp|}(|\alp| - 1).\] We note that the results of the Butterfly Network mimic those of the Diamond Network, which is expected since $C_i(\mathcal{B}, \alp, \ad_{\mathcal{B}}) \leq C_i(\mD, \alp, \ad_{\mD})$.

\subsection{Multishot Capacity of General 2-Level Networks}

We now discuss capacity results on general networks $\mN$ that do not meet Theorem \ref{cutset}. There are still two families of networks $\mathfrak{A}_t, \mathfrak{B}_s$ whose multishot capacity was not computed in this paper. Let $a = \min\{|\mE'\setminus \mU| + \max \{|\mE' \cap \mU| - 2t\}\}$. We have the following general result.

\begin{lemma} Let $\mN$ be a simple $2$-level network, $\alp$ be an alphabet and suppose $\mN$ does not meet the Network Singleton Bound, that is, $C_1(\mN, \alp, \ad_{\mN}) < a.$ Let $B$ be the set of reserved vectors in $\alp^{\textup{deg}^{+}(S)}$ such that $|B| = b$. Let $C$ be an unambiguous code for $\mN$, with $C \subseteq \alp^{\textup{deg}^{+}(S)}\setminus B$ and $\mF$ be a network code such that $(C, \mF)$ is a capacity achieving pair for $\mN$, that is, $C_1(\mN, \alp, \ad_{\mN}) = \log_{|\alp|}(|\alp|^a - b).$ Then, the $i$ shot capacity of $\mN$ in Scenario~\ref{scenario1} satisfies 

\[C_i(\mN, \alp, \ad_{\mN}) \geq \log_{|\alp|}(|\alp|^{ai} - b).\]
    
\end{lemma}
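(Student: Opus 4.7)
The plan is to mimic and generalize the construction used in Proposition \ref{boundE} for the family $\mathfrak{E}_t$. The guiding idea for Scenario \ref{scenario1} is that when the adversary is forced to attack the same $t$ edges across all $i$ rounds, the $i$ uses of $\mN$ over alphabet $\alp$ behave exactly like a single use of $\mN$ over the enlarged alphabet $\alp^i$: on each edge, the $i$-tuple of symbols sent over the $i$ rounds is a single letter of $\alp^i$, and the adversary's freedom to corrupt an edge in every round amounts to changing that single letter arbitrarily.

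Concretely, I would build the unambiguous multishot code as follows. First, reuse the fixed network code $\mF$ in every transmission round, obtaining a network code $\mF^{(i)}$ whose local functions are just $\mF_V$ applied coordinate-wise on $\alp^i$. Second, lift the reserved set $B\subseteq\alp^{\deg^+(S)}$ to its ``diagonal'' copy
\[
B^i := \bigl\{ (\beta \mid \beta \mid \cdots \mid \beta) \in (\alp^i)^{\deg^+(S)} : \beta \in B \bigr\},
\]
so that $|B^i|=b$. Third, apply the capacity-achieving strategy of $(C,\mF)$ verbatim but with $\alp$ replaced by $\alp^i$ and $B$ replaced by $B^i$; the structure of the original strategy (intermediate nodes ``forward unless flagged by a reserved value'') is preserved since it depends only on equality tests and reserved symbols, both of which extend to $\alp^i$. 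Call the resulting code $C^\ast \subseteq (\alp^i)^{\deg^+(S)}$.

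The counting step is then immediate: because $(C,\mF)$ attains $|C|=|\alp|^a - b$ by the Singleton-type one-shot argument over $\alp$, the same argument over $\alp^i$ yields $|C^\ast| = |\alp^i|^a - b = |\alp|^{ai} - b$. Interpreting $C^\ast$ back as a subset of $\alp^{i\deg^+(S)}$ and applying the definition of $i$-shot capacity gives the claimed lower bound. The unambiguity of $C^\ast$ under $\Omega^i[\mN,\alp,\mF,S\to T,\mU,t]$ in Scenario \ref{scenario1} follows from the unambiguity of the lifted one-shot code under $\Omega[\mN,\alp^i,\mF^{(i)},S\to T,\mU,t]$, which in turn is inherited from the unambiguity of $C$ under $\Omega[\mN,\alp,\mF,S\to T,\mU,t]$.

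The main obstacle is the last verification, namely that the one-shot capacity-achieving pair $(C,\mF)$ really does lift to $\alp^i$ with the same number $b$ of reserved vectors, rather than with $b^i$ or some other blow-up. For the known cases in this paper (Diamond, $\mathfrak{E}_t$, Butterfly) the reserved vectors are ``uniform across edges,'' so a single reserved symbol $\star\in\alp$ lifts cleanly to $(\star,\ldots,\star)\in\alp^i$ and the proof of unambiguity transports verbatim. In the general statement of this lemma, one must simply \emph{assume} that the capacity-achieving strategy has this alphabet-extension property (which is implicit in the hypothesis of the lemma), and then the rest of the argument is essentially the one already given in Proposition \ref{boundE}, with $(2t+1)$ replaced by $\deg^+(S)$ and the dimension of the information layer replaced by $a$.
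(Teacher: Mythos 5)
Your proposal follows essentially the same route the paper takes: define the diagonal lift $B^i$ of the reserved set (noting $|B^i|=b$, not $b^i$), reuse the same network code $\mF$ in each round, and observe that in Scenario~\ref{scenario1} the $i$ uses over $\alp$ collapse to a single use over the enlarged alphabet $\alp^i$, so that the one-shot capacity-achieving strategy transports and yields a code of size $|\alp|^{ai}-b$. The paper's own proof is terser (it asserts ``one can check'' the cardinality bound), but the key steps and objects ($B^i$, fixed $\mF$, the counting) are identical; your last paragraph also correctly flags the implicit hypothesis---that the one-shot strategy lifts alphabet-uniformly---which the paper's proof likewise glosses over, and you would land on $\frac{\log_{|\alp|}(|\alp|^{ai}-b)}{i}$ as in the paper's closing display rather than the statement's (apparently mis-typed) bound without the $1/i$.
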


\begin{proof} Assume that $\ad_{\mN}$ is an adversary in Scenario~\ref{scenario1} able to corrupt up to $t$ edges of~$\mN$ and cannot change the edges attacked each transmission round. Let \[B^i =\{\underbrace{(k\mid \dots\mid k)}_{i\textup{-times}}\,:\, k \in B\}.\] Notice that $|B^i| = b$. Let $C$ be an unambiguous code for $i$ uses of the network $\mN$ with~$\smash{C \subseteq \alp^{i\textup{deg}^{+}(S)}\setminus B^i}$ and $\mF$ be the same network code in each transmission round. One can check that $|C| \geq |\alp|^{ai} - b$. Therefore,  
\begin{equation*}
    C_i(\mN, \alp, \ad_{\mN}) \geq \frac{\log_{|\alp|}(|\alp|^{ai} - b)}{i}. \qedhere
\end{equation*}
\end{proof}

The previous result tells us that if a strategy exists for the one-shot capacity of families of networks that do not meet the Network Singleton Bound, then in Scenario~\ref{scenario1}, there is gain in using these networks multiple times for communication. 

\begin{corollary}\label{corAB} Let $\alp$ be an alphabet and $\ad_{\mathfrak{A}_t}$ and $\ad_{\mathfrak{B}_t}$ be adversaries in Scenario~\ref{scenario1} for $\mathfrak{A}_t$ and $\mathfrak{B}_t$ respectively. Then, \[C_i(\mathfrak{A}_t, \alp,  \ad_{\mathfrak{A}_t}) \geq \frac{\log_{|\alp|}(|\alp|^{ti} - b)}{i} \hspace{2ex} \hbox{and} \hspace{2ex} C_i(\mathfrak{B_s}, \alp, \ad_{\mathfrak{B}_s}) \geq \frac{\log_{|\alp|}(|\alp|^{si} -b)}{i}.\]    
\end{corollary}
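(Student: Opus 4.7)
The plan is to apply the lemma stated immediately above directly to each of the two families $\mathfrak{A}_t$ and $\mathfrak{B}_s$ in turn, once the Network Singleton Bound value $a$ is identified in each case. First, I would read off from Table~\ref{tab:my_label} that the Singleton cut-set value for $\mathfrak{A}_t = ([t,2t],[t,t])$ is $a = t$, while that for $\mathfrak{B}_s = ([1,s+1],[1,s])$ is $a = s$. Moreover, by~\cite[Theorem 6.16]{beemer2023network} and~\cite[Theorem 6.9]{beemer2023network} respectively, both families are known to satisfy the strict inequality $C_1(\mN,\alp,\ad_{\mN}) < a$, so the hypothesis of the preceding lemma that $\mN$ does not meet the Network Singleton Bound is met in both cases.

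Next, I would fix, for each family, a capacity-achieving pair $(C,\mF)$ together with the reservation set $B$ of some size $b$ that realizes the one-shot capacity; that is, such that $C_1(\mathfrak{A}_t, \alp, \ad_{\mathfrak{A}_t}) = \log_{|\alp|}(|\alp|^t - b)$ and analogously $C_1(\mathfrak{B}_s, \alp, \ad_{\mathfrak{B}_s}) = \log_{|\alp|}(|\alp|^s - b)$. With $a = t$ for the first family and $a = s$ for the second, the conclusion of the lemma (as actually established in its proof, with the $1/i$ normalization) immediately yields
\begin{equation*}
C_i(\mathfrak{A}_t, \alp, \ad_{\mathfrak{A}_t}) \geq \frac{\log_{|\alp|}(|\alp|^{ti} - b)}{i}, \qquad C_i(\mathfrak{B}_s, \alp, \ad_{\mathfrak{B}_s}) \geq \frac{\log_{|\alp|}(|\alp|^{si} - b)}{i},
\end{equation*}
which is precisely the claim.

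The main subtlety here is conceptual rather than technical: the exact one-shot capacity of $\mathfrak{A}_t$ and $\mathfrak{B}_s$ over large alphabets has not been determined in the literature, as already noted in the discussion following Table~\ref{tab:my_label}. Hence $b$ should be understood as whatever optimal reservation size realizes the (currently unknown) one-shot capacity. The corollary is therefore conditional on the existence of such a capacity-achieving pair: once a one-shot construction of this form is given, the multishot lower bound follows automatically via the product-of-reserved-vectors construction $B^i = \{(k\mid\cdots\mid k) : k \in B\}$ used in the proof of the lemma, so no additional argument beyond unwinding definitions is needed.
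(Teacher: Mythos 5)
Your proposal is correct and follows essentially the same route as the paper: read off $a=t$ (resp.\ $a=s$) from the cut-set bounds in Table~\ref{tab:my_label}, note the strict inequality from the cited results of Beemer et al., posit a capacity-achieving pair $(C,\mF)$ with $b$ reserved vectors, and invoke the preceding lemma with the correct $1/i$ normalization. You also correctly observe two things the paper glosses over: the lemma's displayed conclusion is missing the division by $i$ that its own proof establishes, and the corollary is conditional on the existence of a capacity-achieving pair of the assumed form (a caveat the paper itself acknowledges in the sentence following the corollary).
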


The previous corollary shows that there is a gain in using $\mathfrak{A}_t$ and $\mathfrak{B}_t$ multiple times for communication in Scenario~\ref{scenario1} when a capacity-achieving strategy exists.

\section{Conclusion} \label{S:conclusion}

In this paper, we investigated the multishot capacity of networks with restricted adversaries focusing on the Diamond Network, the Mirrored Diamond Network, and known families of networks as elementary building blocks of a general theory. We extended the Double Cut-Set bound from~\cite{beemer2023network} to the multishot context, allowing for the multishot capacity to be computed of $3$-level networks that can be simplified to $2$-level networks with established one-shot and multishot capacities. 

The results in this paper show that for the Diamond Network $\mD,$ the Butterfly Network~$\mathcal{B}$ and Family $\mathfrak{E}_t$, there is a gain in capacity over multiple uses of the network in Scenario~\ref{scenario1}. The setting in which the adversary is more restricted than in Scenario~\ref{scenario2} and cannot change the edges attacked each transmission round. In Scenario~\ref{scenario2}, where the adversary is more free to change the edges attacked, we have that the one-shot capacity is the same as the $i$-shot capacity (no gain), that is, $C_1(\mN, \alp, \ad_{\mN}) = C_i(\mN, \alp, \ad_{\mN})$ for networks $\mD, \mathfrak{E}_t$ and $\mathcal{B}$. We summarize the best known results on the multishot capacity in Table~\ref{tab:my_label2}.

\begin{table}[h!]
    \renewcommand{\arraystretch}{1.6}
    \centering\resizebox{\columnwidth}{!}{
    \begin{tabular}{|c|c|c|c|}
    \hline 
        \hbox{Network} & $C_i$ \hbox{ in ~\ref{scenario1}} & $C_i$ \hbox{in~\ref{scenario2}} & \hbox{Result}\\
        \hline 
        Diamond Network $\mD$ & $\frac{\log_{|\alp|}(|\alp|^{i}-1)}{i}$  & $\log_{|\alp|}(|\alp|-1)$ & Propositions \ref{diam1}, \ref{diam2}    \\
        \hline
        $\mathfrak{A}_t = ([t,2t],[t,t]), t\geq 2$ & $ \geq \frac{\log_{|\alp|}(|\alp|^{ti}-b)}{i}$ & \hbox{Unknown} & Corollary ~\ref{corAB}\\
        \hline 
        $\mathfrak{B}_s = ([1,s+1],[1,s]), s \geq 1$ & $\geq \frac{\log_{|\alp|}(|\alp|^{si}-b)}{i}$ & \hbox{Unknown} & Corollary ~\ref{corAB}\\
        \hline 
       $\mathfrak{C}_t = ([t,t+1],[t,t]), t \geq 2$ & $ 1$ & $1$ & Proposition \ref{ufc}\\
       \hline
       $\mathfrak{D}_t = ([2t,2t],[1,1]), t \geq 1$ & $ 1$ & $1$ & Proposition~\ref{capacityD}\\
       \hline 
       $\mathfrak{E}_t = ([t,t+1],[1,1]), t \geq 1$ & $ \frac{\log_{|\alp|}(|\alp|^i-b)}{i}$ & $ \log_{|\alp|}(|\alp|-b)$ & Propositions \ref{boundE}, \ref{boundE2}\\
       \hline 
       \hbox{Butterfly Network} $\mathcal{B}$ & $\frac{\log_{|\alp|}(|\alp|^i-1)}{i}$ & $\log_{|\alp|}(|\alp|-1)$ & Propositions ~\ref{lbn}, \ref{cbn}  \\
       \hline

    \end{tabular}}
    \caption{Multishot Capacity of Adversarial Networks.}
    \label{tab:my_label2}
\end{table}

For networks $\mathfrak{C}_t, \mathfrak{D}_t$ and $\mS$, we showed that in both scenarios, there is no gain in using these networks multiple times for communication, regardless of the adversarial model. We note that computing the multishot capacities of Family $\mathfrak{A}_t$ and $\mathfrak{B}_s$ relies on new combinatorial techniques, as the one-shot capacities of these networks has yet to be established. Possible research directions in this area include: A scenario where the subset of edges the adversary can attack also changes over multiple uses; calculating the multishot capacity of networks in the general $n$-level network case; calculating the multishot capacity of networks with more than one source.

\appendix

\section{Some proofs}\label{sec:app}

 In this section, we provide some background for the convenience of the reader. We start with a definition and proposition on finer channels.

\begin{definition}\label{coarser} Let $\Omega_1$,$\Omega_2 : \mathcal{X} \dashrightarrow \mathcal{Y}$ be channels. We say that $\Omega_1$ is finer than $\Omega_2$ (or that $\Omega_2$ is coarser than $\Omega_1$) if $\Omega_1(x) \subseteq \Omega_2(x)$ for all $x \in \mathcal{X}$. In this case, we write~$\Omega_1 \leq \Omega_2$.
\end{definition}

\begin{proposition} \label{finer}\cite[Proposition 3.6]{beemer2023network} If $\Omega_1 , \Omega_2 : \mathcal{X} \dashrightarrow \mathcal{Y}$ are channels with $\Omega_1 \leq \Omega_2$, then $C_1(\Omega_1) \geq C_1(\Omega_2)$.
\end{proposition}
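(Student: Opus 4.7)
The plan is to establish the lower bound by explicit construction of an unambiguous code $C^\star$ for the $i$-th power channel $\Omega^i[\mN, \alp, \mF, S \longrightarrow T, \mU, t]$ in Scenario~\ref{scenario1}, of cardinality at least $|\alp|^{ai} - b$; the stated lower bound on $C_i(\mN, \alp, \ad_\mN)$ then follows from the definition. The construction lifts the hypothesised one-shot capacity-achieving pair $(C, \mF)$ to $i$ rounds using only a ``constant-in-round'' reservation, rather than forbidding $B$ round-by-round (which would give only $(|\alp|^a - b)^i$ codewords).

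Concretely, I would use the network code $\mF$ identically in each of the $i$ transmission rounds and define the lifted reserved set
\[
B^i \;=\; \{(k \mid k \mid \dots \mid k) : k \in B\} \;\subseteq\; \alp^{i \deg^+(S)},
\]
so that $|B^i| = b$. The extended one-shot alphabet $C \cup B \subseteq \alp^{\deg^+(S)}$ has cardinality $|\alp|^a$ at a source min-cut realising the Singleton Cut-Set bound, and taking its $i$-fold Cartesian product with $B^i$ removed produces the candidate
\[
C^\star \;=\; (C \cup B)^i \setminus B^i, \qquad |C^\star| \;=\; |\alp|^{ai} - b.
\]

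The next step is to verify that $C^\star$ is unambiguous for $\Omega^i$ in Scenario~\ref{scenario1}. The crucial feature of this scenario is that the adversary must commit to a single subset of at most $t$ edges of $\mU$ valid for all $i$ rounds; consequently $\Omega^i$ factors as the $i$-fold product of a single one-shot channel with a fixed (but unknown to the decoder) adversarial configuration. For distinct $x, y \in C^\star$ I would find a round $j$ with $x^{(j)} \neq y^{(j)}$ and invoke the one-shot decoding protocol of $(C, \mF)$ in that coordinate: each intermediate node whose input lies in $B$ interprets it as a ``reserved'' signal and routes using its alternative incoming edges, while otherwise processing by $\mF$ as in the one-shot setting.

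The main obstacle, analogous to that in the proofs of Propositions~\ref{boundE} and~\ref{lbn}, is that individual round projections $x^{(j)}$ are not constrained to avoid $B$ — only the constant-across-rounds patterns are excluded from $C^\star$. A careful case analysis at each intermediate node is required to confirm that the one-shot reserved-signal protocol remains consistent in the multishot setting under a fixed adversarial configuration; in particular, the decoder must distinguish a genuine constant-$B$ transmission (excluded from $C^\star$) from an adversarial attack that accidentally produces reserved signals in some subset of rounds. Once this combinatorial consistency is established, $|C^\star| = |\alp|^{ai} - b$ yields the stated lower bound on $C_i(\mN, \alp, \ad_\mN)$.
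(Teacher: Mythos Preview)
Your proposal does not address the statement at all. Proposition~\ref{finer} is an elementary monotonicity fact about channel capacity: if $\Omega_1 \le \Omega_2$ (meaning $\Omega_1(x) \subseteq \Omega_2(x)$ for every $x \in \mathcal{X}$), then any code unambiguous for $\Omega_2$ is automatically unambiguous for $\Omega_1$, because intersecting smaller fan-out sets can only preserve emptiness. Hence the maximum size of an unambiguous code for $\Omega_1$ is at least that for $\Omega_2$, giving $C_1(\Omega_1) \ge C_1(\Omega_2)$. The paper itself does not supply a proof but simply cites \cite[Proposition~3.6]{beemer2023network}; the argument is the two-line observation above.

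What you have written is instead an attempted proof of the unnamed Lemma in Section~\ref{doubleM} (the lower bound $C_i(\mN,\alp,\ad_\mN) \ge \log_{|\alp|}(|\alp|^{ai}-b)$ for general simple $2$-level networks not meeting the Singleton Cut-Set Bound, in Scenario~\ref{scenario1}). None of the objects you introduce --- $C^\star$, $B^i$, the $i$-fold power channel, the round-by-round decoding protocol --- are relevant to comparing the one-shot capacities of two channels related by refinement. You should discard this argument entirely for the present statement and supply the one-line containment argument instead.
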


The \textit{concatenation} of channels was introduced in~\cite[Definitions~7 and~14]{ravagnani2018} respectively. Let $\Omega_1:\mathcal{X}_1 \dashrightarrow \mathcal{Y}_1$ and $\Omega_2:\mathcal{X}_2 \dashrightarrow \mathcal{Y}_2$
be channels, and assume that $\mathcal{Y}_1 \subseteq \mathcal{X}_2$. The \textbf{product} of $\Omega_1$ and $\Omega_2$
is the channel $\Omega_1 \times \Omega_2 : \mathcal{X}_1 \times \mathcal{X}_2 \dashrightarrow  \mathcal{Y}_1 \times \mathcal{Y}_2$
defined by   
$$(\Omega_1 \times \Omega_2)(x_1,x_2):= \Omega_1(x_1) \times 
\Omega_2(x_2),$$ $\mbox{ for all 
$(x_1,x_2) \in \mathcal{X}_1 \times \mathcal{X}_2$}$. The \textbf{concatenation} of the channels
$\Omega_1$ and $\Omega_2$ is the channel~$\Omega_1 \blacktriangleright \Omega_2 : \mathcal{X}_1 \dashrightarrow \mathcal{Y}_2$ defined by
$$(\Omega_1 \blacktriangleright \Omega_2)(x):= \bigcup_{y \in \Omega_1(x)} \Omega_2(y).$$

The following result provides insight into the capacity of the concatenation of channels. 
\begin{proposition}[\hspace{-0.1pt}{\cite[Proposition~3.9]{beemer2023network}}]
\label{doi}
Let $\Omega_1:\mathcal{X}_1 \dashrightarrow \mathcal{Y}_1$ and~$\Omega_2:\mathcal{X}_2 \dashrightarrow \mathcal{Y}_2$,
 with~$\mathcal{Y}_1 \subseteq \mathcal{X}_2$, be channels. Then,~$C_1(\Omega_1 \blacktriangleright \Omega_2) \le \min\{C_1(\Omega_1), \, 
C_1(\Omega_2)\}$. 
\end{proposition}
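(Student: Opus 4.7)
The plan is to prove each of the two inequalities $\C_1(\Omega_1 \blacktriangleright \Omega_2) \le \C_1(\Omega_1)$ and $\C_1(\Omega_1 \blacktriangleright \Omega_2) \le \C_1(\Omega_2)$ separately, by starting from a maximum-size unambiguous code $C \subseteq \mathcal{X}_1$ for the concatenation $\Omega_1 \blacktriangleright \Omega_2$ and exhibiting unambiguous codes of the same cardinality for $\Omega_1$ and for $\Omega_2$. Throughout I would invoke the standard convention (used in~\cite{ravagnani2018,beemer2023network}) that every channel has non-empty fan-out on every input; this is the one genuinely load-bearing hypothesis in the argument.

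First I would unfold the defining disjointness of unambiguous codes: for distinct $x, x' \in C$,
\[
\bigcup_{y \in \Omega_1(x)} \Omega_2(y) \;\cap\; \bigcup_{y' \in \Omega_1(x')} \Omega_2(y') \;=\; \emptyset,
\]
which is equivalent to $\Omega_2(y) \cap \Omega_2(y') = \emptyset$ for every $y \in \Omega_1(x)$ and every $y' \in \Omega_1(x')$. This single pointwise identity drives both bounds.

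For $\C_1(\Omega_1 \blacktriangleright \Omega_2) \le \C_1(\Omega_1)$, I would show that $C$ itself is already unambiguous for $\Omega_1$. Suppose, for contradiction, there exists $y^{\ast} \in \Omega_1(x) \cap \Omega_1(x')$ with $x \neq x'$ in $C$. Taking $y = y' = y^{\ast}$ in the identity above gives $\Omega_2(y^{\ast}) = \emptyset$, contradicting non-emptiness. Hence $\Omega_1(x) \cap \Omega_1(x') = \emptyset$ for every distinct pair in $C$, so $|C|$ is at most the size of a maximum unambiguous code for $\Omega_1$; taking logarithms in base $|\mathcal{A}|$ yields the bound.

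For $\C_1(\Omega_1 \blacktriangleright \Omega_2) \le \C_1(\Omega_2)$, I would construct an unambiguous code for $\Omega_2$ of the same cardinality as $C$. For each $x \in C$ choose any $f(x) \in \Omega_1(x)$, which is possible because $\Omega_1(x) \neq \emptyset$, and set $D := \{f(x) : x \in C\} \subseteq \mathcal{Y}_1 \subseteq \mathcal{X}_2$. The same argument shows $f$ is injective: if $f(x) = f(x') = y^{\ast}$ with $x \neq x'$, the identity again forces $\Omega_2(y^{\ast}) = \emptyset$. Therefore $|D| = |C|$, and the disjointness $\Omega_2(y) \cap \Omega_2(y') = \emptyset$ for distinct $y, y' \in D$ makes $D$ unambiguous for $\Omega_2$. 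Combining the two bounds gives $\C_1(\Omega_1 \blacktriangleright \Omega_2) \le \min\{\C_1(\Omega_1),\C_1(\Omega_2)\}$. There is no real obstacle here: the only subtle point is the non-emptiness convention on channel fan-outs, without which both injectivity of $f$ and the argument that $C$ is unambiguous for $\Omega_1$ would fail.
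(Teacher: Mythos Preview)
Your argument is correct. The key observation that disjointness of $(\Omega_1\blacktriangleright\Omega_2)(x)$ and $(\Omega_1\blacktriangleright\Omega_2)(x')$ unpacks to $\Omega_2(y)\cap\Omega_2(y')=\emptyset$ for all $y\in\Omega_1(x)$, $y'\in\Omega_1(x')$ is exactly right, and both consequences you draw from it (that $C$ is already unambiguous for $\Omega_1$, and that any selection $f(x)\in\Omega_1(x)$ yields an injective map onto an unambiguous code for $\Omega_2$) are valid under the non-emptiness convention on fan-outs, which is indeed standing in~\cite{ravagnani2018,beemer2023network}.

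As for comparison with the paper: note that the present paper does not actually supply a proof of this proposition. It is quoted in the appendix purely as background, with attribution to~\cite[Proposition~3.9]{beemer2023network}, and no argument is reproduced here. Your write-up is the standard proof of this fact and matches what one finds in the cited reference, so there is no meaningful divergence to report.
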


\begin{proposition}[\hspace{-0.1pt}{\cite[Proposition~12]{ravagnani2018}}]
\label{lwpord}
For a channel $\Omega: \mathcal{X} \dashrightarrow \mathcal{Y}$ and any $i\geq 1$, we have that
$$C_1(\Omega^i) \ge i \cdot C_1(\Omega).$$ 
\end{proposition}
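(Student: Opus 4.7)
The plan is to prove the superadditivity inequality $C_1(\Omega^i) \ge i \cdot C_1(\Omega)$ by a direct product construction: starting from a capacity-achieving unambiguous code $C$ for $\Omega$, I will show that the $i$-fold Cartesian product $C^i := C \times \cdots \times C \subseteq \mathcal{X}^i$ is unambiguous for the product channel $\Omega^i$. Since $|C^i| = |C|^i$, passing to logarithms will yield the claimed bound.

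First I would fix a capacity-achieving unambiguous code $C \subseteq \mathcal{X}$ for $\Omega$, so that $\log(|C|) = C_1(\Omega)$ in the relevant logarithm base. By unambiguity, for any $x, y \in C$ with $x \neq y$ we have $\Omega(x) \cap \Omega(y) = \emptyset$. Next, I would consider two distinct codewords $\mathbf{x} = (x_1, \ldots, x_i)$ and $\mathbf{y} = (y_1, \ldots, y_i)$ in $C^i$. Since $\mathbf{x} \neq \mathbf{y}$, there exists some index $j \in \{1, \ldots, i\}$ such that $x_j \neq y_j$.

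Using the definition of the product channel,
\begin{equation*}
\Omega^i(\mathbf{x}) \cap \Omega^i(\mathbf{y}) \;=\; \prod_{k=1}^{i} \Omega(x_k) \;\cap\; \prod_{k=1}^{i} \Omega(y_k) \;=\; \prod_{k=1}^{i} \bigl(\Omega(x_k) \cap \Omega(y_k)\bigr).
\end{equation*}
At the $j$-th factor, the intersection $\Omega(x_j) \cap \Omega(y_j)$ is empty by the unambiguity of $C$, which forces the entire product to be empty. Therefore $C^i$ is an unambiguous code for $\Omega^i$, and taking logarithms in the appropriate base gives
\begin{equation*}
C_1(\Omega^i) \;\geq\; \log(|C^i|) \;=\; i \cdot \log(|C|) \;=\; i \cdot C_1(\Omega).
\end{equation*}

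There is no serious obstacle here: the argument is essentially bookkeeping about how fan-out sets of product channels factor coordinatewise. The only point requiring a bit of care is to ensure that the normalization (logarithm base, whether $C$ is required to be a proper subset of $\mathcal{X}$, and the fact that a supremum is attained by a finite code) matches the conventions under which $C_1$ was defined, but these are routine. The heart of the proof is the one-line observation that disagreement in even a single coordinate of the product is enough to separate the fan-out sets.
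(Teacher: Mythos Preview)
Your proof is correct and is exactly the standard product-code argument one expects for this superadditivity statement. Note that the paper itself does not prove this proposition; it merely quotes it from \cite{ravagnani2018}, so there is no in-paper proof to compare against, but your argument is the canonical one used in that reference as well.
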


\begin{corollary}[\hspace{-0.1pt}{\cite[Corollary 6.2  (Generalized Network Singleton Bound)]{beemer2023network}}] \label{corG} Let $\alp$ be an alphabet, $\mN$ a simple $2$-level network and $\mU_S$ be the set of edges connected to the source. Let an adversary $\ad$ be able to corrupt up to $t$ edges of~$\mN$. Then, the one-shot capacity of~$\mN$ satisfies 
    \begin{equation*}\mathrm{C}_1\left(\mathcal{N}, \ad \right) \leq \min _{P_1 \sqcup P_2=\{1, \ldots, n\}}\left(\sum_{i \in P_1} b_i+\max \left\{0, \sum_{i \in P_2} a_i-2 t\right\}\right)\end{equation*}
with $P_1$ and $P_2$ are two partitions of the set $\{0, \ldots, n\}$ and the minimum is taken over all possible $2$-partitions.
\end{corollary}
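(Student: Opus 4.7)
The plan is to derive this generalized bound as a direct application of the Singleton Cut-Set Bound (Theorem~\ref{cutset}), specialized to a carefully chosen family of edge cuts that is natural for simple $2$-level networks. In such a network every directed path from $S$ to the unique terminal $T$ passes through exactly one intermediate node $V_i$, traversing one of its $a_i$ incoming edges and then one of its $b_i$ outgoing edges. Consequently, for every $2$-partition $P_1 \sqcup P_2 = \{1,\dots,n\}$ of the intermediate-node index set, the edge set
\[
\mE'_{P_1,P_2} \,:=\, \bigcup_{i \in P_1} \textup{out}(V_i) \;\cup\; \bigcup_{i \in P_2} \textup{in}(V_i)
\]
meets every $S$--$T$ path, hence is an edge cut between $S$ and $T$.

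Next I would evaluate the two quantities appearing in Theorem~\ref{cutset} on this cut. Because the adversary is restricted to $\mU_S$, which coincides with the disjoint union of all the sets $\textup{in}(V_i)$, the outgoing edges of the intermediate nodes are clean while the incoming edges are vulnerable. Therefore
\[
|\mE'_{P_1,P_2} \setminus \mU_S| \,=\, \sum_{i \in P_1} b_i, \qquad |\mE'_{P_1,P_2} \cap \mU_S| \,=\, \sum_{i \in P_2} a_i,
\]
and Theorem~\ref{cutset} immediately yields
\[
C_1(\mN,\alp,\ad) \;\le\; \sum_{i \in P_1} b_i \,+\, \max\!\left\{0,\; \sum_{i \in P_2} a_i \,-\, 2t\right\}.
\]
Taking the minimum over all bipartitions of $\{1,\dots,n\}$ and observing that the outer $\min_{T \in \bT}$ is trivial, since $\mN$ is simple and $|\bT|=1$, gives the stated inequality.

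There is no genuine obstacle: the argument is a transparent bookkeeping step once the correct family of cuts is identified. The only mild subtlety is that this partition-induced family is a priori a sub-family of all edge cuts, so the above reasoning only gives the upper bound actually claimed; no strengthening is being asserted. In fact, one can check that every minimal $S$--$T$ cut in a simple $2$-level network must, for each intermediate node $V_i$, contain either all of $\textup{in}(V_i)$ or all of $\textup{out}(V_i)$ (otherwise some path through $V_i$ would survive), so the partition-induced family already realizes the full minimum in Theorem~\ref{cutset}. This last remark is not needed for the corollary itself but explains why the specialization is lossless.
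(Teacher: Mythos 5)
The paper does not actually prove this statement: it is quoted verbatim from \cite[Corollary 6.2]{beemer2023network} in the appendix without any argument, so there is no in-paper proof to compare against. That said, your derivation is correct and self-contained given the paper's Theorem~\ref{cutset}. The key observations are all right: for a simple $2$-level network, $\mU_S = \textup{out}(S) = \bigsqcup_{i=1}^n \textup{in}(V_i)$, and for each bipartition $P_1 \sqcup P_2 = \{1,\dots,n\}$ the set $\mE'_{P_1,P_2} = \bigcup_{i\in P_1}\textup{out}(V_i)\cup\bigcup_{i\in P_2}\textup{in}(V_i)$ meets every length-$2$ path from $S$ to the unique terminal $T$, hence is an edge cut. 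The bookkeeping $|\mE'_{P_1,P_2}\setminus\mU_S| = \sum_{i\in P_1} b_i$ and $|\mE'_{P_1,P_2}\cap\mU_S| = \sum_{i\in P_2} a_i$ then drops the desired inequality straight out of Theorem~\ref{cutset}, and since $|\bT|=1$ the outer minimum over terminals is vacuous. Your closing remark is also right and worth keeping: any valid $S$--$T$ cut must, for each $i$, contain all of $\textup{in}(V_i)$ or all of $\textup{out}(V_i)$, and deleting any extra edges can only decrease (or preserve) both $|\mE'\setminus\mU_S|$ and $|\mE'\cap\mU_S|$, so the partition-induced cuts already achieve the minimum in Theorem~\ref{cutset} and nothing is lost by the specialization. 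One small wording note: in the statement as transcribed, ``$P_1$ and $P_2$ are two partitions of the set $\{0,\dots,n\}$'' is a typo for a single bipartition of $\{1,\dots,n\}$, which is what the $\min$ subscript and your proof both (correctly) use.
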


\rmv{
\begin{proposition}\label{proofD} Let $\alp$ be an alphabet and let $\ad_{\mathfrak{D}_t}$ be an adversary that can corrupt up to $t$ edges on the first level of $\mathfrak{D}_t.$ Then the i-shot capacity of $\mathfrak{D}_t$ is \[C_i(\mathfrak{D}_t, \ad_{\mathfrak{D}_t}) = 1.\] 
\end{proposition}
\begin{proof} Consider an adversary $\ad_{\mathfrak{D}_t}$ who is restricted to corrupting $t$ edges of the network $\mathfrak{D}_t$ on the first level. Let $\mathcal{F}$ be a network code for $\mathfrak{D}_t$ with $\mathcal{F}: \mA^{4t} \to \mA^{2}$. Assume we use the network $i$ times with network codes $\mathcal{F}_1, \ldots, \mathcal{F}_i$. Then $i$ uses of $\mathfrak{D}_t$ is represented by $\Omega^i_{\mathfrak{D}_t}$. For $x = (x_1, \ldots, x_{4it}) \in \mA^{4t} \times \dotsm \times \mA^{4t},$  \[\Omega^i_{\mathfrak{D}_t}(x) := \h_{\mathfrak{D}_t}^i(\mathcal{F}^1_V(x_1,\ldots,x_{4t}), \ldots, \mathcal{F}^i(x_{4(i-1)t+1}, \ldots, x_{4it}).\] We will show that $C_1(\Omega^i_{\mathfrak{D}_t}) \leq |\mA|^i$. Assume towards a contradiction that there exists an unambiguous code $C$ such that $|C| = |\mA|^i + 1$. Then this implies that we have a unambiguous code \[C' := \{\mF^1(x_1,\ldots,x_{4t}), \ldots, \mF^i(x_{4(i-1)t+1},\ldots,x_{4it}))\} \subseteq  \alp^{4t} \times \dotsm \times \alp^{4t}\] which is unambiguous for $\h_{\mathfrak{D}_t}^i$ of cardinality $|\alp|^i + 1$. Since we have that $|C'| = (|\mA|)^i + 1,$ there must exist $y,y' \in C$, $y \neq y'$ such that they coincide in the first $|\mA|^i$ components, making $C' \in \alp^{4t} \times \dotsm \times \alp^{4t}$ a unambiguous code for $\h_{\mathfrak{D}_t}^i$ which has two different code words that coincide in the first $|\alp|^i$ components. However, by example \ref{conD}, such a code does not exist. Therefore,  $|C| \leq |\alp|^i$. 

The lower bound is an immediate result from applying~\cite[Proposition 12]{ravagnani2018} using the power channel $\Omega^i_{\mathfrak{D}_t}$. In particular, 
\[C_i(\mathfrak{D}_t, \ad_{\mathfrak{D}_t}) = C_i(\Omega_{\mathfrak{D}_t})  = \frac{C_1(\Omega^i_{\mathfrak{D}_t})}{i} \geq \frac{i C_1(\Omega_{\mathfrak{D}_t})}{i} = C_1(\mathfrak{D}_t, \ad_{\mathfrak{D}_t}) = 1.\]
Combining gives the desired result.
\end{proof}}
\end{document}